\documentclass{article}
\usepackage[utf8]{inputenc}

\usepackage{amsmath,amsfonts,amsthm,amssymb,amsxtra,bm}
\usepackage{setspace}
\usepackage{Tabbing}
\usepackage{fancyhdr}
\usepackage{lastpage}
\usepackage{authblk}
\usepackage{extramarks}
\usepackage{chngpage}
\usepackage{soul,color}
\usepackage{graphicx,wrapfig}
\usepackage{tabularx}
\usepackage{caption}
\usepackage{bbm}
\usepackage[dvipsnames]{xcolor}
\usepackage{hyperref}
\hypersetup{
    colorlinks=true,
    linkcolor={red!50!black},
    citecolor={blue!50!black},
    urlcolor={blue!80!black}
}
\usepackage{cleveref}
\usepackage{apacite}
\usepackage{dirtytalk}

\usepackage{floatrow}
\usepackage{subfig}
\usepackage{enumitem}

\newtheorem{theorem}{Theorem}[section]
\newtheorem{lemma}[theorem]{Lemma}
\newtheorem{proposition}[theorem]{Proposition}
\newtheorem{corollary}[theorem]{Corollary}

\newtheorem{assumption}[theorem]{Assumption}

\theoremstyle{remark}

\newcommand{\R}{\mathbb{R}}

\newcommand{\E}[2][{}]{\mathbb{E}_{#1}\hspace{-0.15em}\left[#2\right]}

\newcommand{\td}[2]{\frac{d#1}{d#2}}

\newcommand{\sumdu}[2]{\overset{#2}{\underset{#1}{\sum}}\,}

\newcommand{\limd}[1]{\underset{#1}{\lim }\,}

\newcommand{\remove}[1]{}

\title{Feedback dynamics in matching networks drive behavioral differentiation despite overlapping objectives\footnote{An earlier version of the mathematical theory presented here was developed in collaboration with Athanasios Kehagias (Aristotle University of Thessaloniki).}}
\author{Alexandros Gelastopoulos}

\affil{Department of Social and Behavioral Sciences,\\ Toulouse School of Economics,\\
University of Toulouse Capitole,\\
Toulouse 31080, France.
}

\affil{Department of Business and Management,\\
University of Southern Denmark,\\
Odense DK-5230, Denmark.
}

\affil{Email: alexandros.gelastopoulos@iast.fr\\
ORCID: 0000-0003-1428-0130}
\begin{document}

\maketitle

\begin{abstract}
    Many bipartite social networks exhibit pronounced asymmetries in selectivity and matching opportunities: members of one side can afford to be highly selective, while members of the opposite side are forced to accept less desirable matches. While it is natural to try to explain this asymmetry in terms of the intrinsic characteristics of the two sides or other exogenous factors, here we show that such asymmetries can also emerge endogenously through a feedback process generated by the matching process itself: as one side becomes more selective, the other side is pushed to be less selective due to reduced matching opportunities, and vice versa. We develop a model in which individuals repeatedly form one-to-one matches across two groups and adapt their selectivity to achieve a target matching rate. Using both analytic and numerical methods, we show that when encounters are sufficiently frequent, the unique equilibrium is for one group to be highly selective and the other non-selective. This qualitative outcome holds even for heterogeneous groups with overlapping, almost indistinguishable distributions of target matching rates. The model makes several testable predictions, and it provides a mechanism for behavioral differentiation in repeated matching environments, with applications ranging from online dating to hiring and housing markets.
\end{abstract}

\section{Introduction}

Many social interactions can be modeled as matches that form between members of two distinct groups of agents: employers and people looking for jobs, landlords and apartment seekers, men and women looking for partners, scientific conferences and researchers who submit their work, and so on. Often these bipartite networks are highly skewed, in the sense that agents on one side have plenty of opportunities to match and, as a result, can afford to be highly selective, while agents on the opposite side have to put a lot of effort into finding any match. While this is unsurprising when there is scarcity of agents of one type and abundance of the other, this high imbalance in selectivity and matching opportunities can occur even when the two pools of agents have roughly equal numbers and matches are one-to-one. 

A striking example of this is matching on dating apps: in a controlled experiment with fake profiles on Tinder, \citeA{tyson2016first} found that female profiles receive 100 times more heterosexual `likes' than male profiles do. This extreme imbalance cannot be explained by the gender ratio of dating app users alone, which empirical estimates suggest is only moderately male-skewed \cite{SSRS2025OnlineDating}. Although it is tempting to try to explain such asymmetries in terms of sex differences in dating goals, either evolutionarily established or socioculturally driven, here we ask whether there is something inherent in the dynamics of matching networks that drives one side to be selective and the other non-selective.
Answering this will give insight into other types of bipartite matching networks, including hiring networks, the housing market, and academic paper submissions.

The core idea of our paper is the following: because every match in such networks involves one member from each of the two groups, the more selective one of the two groups becomes, the less selective the opposite group can afford to be and still find a match, and vice versa. This creates a feedback loop that can potentially exaggerate any pre-existing imbalances in population sizes or willingnesses to match. Still, it is unclear whether this mechanism can trigger a runaway process that leads the `in demand' group to become orders of magnitude more selective than the opposite group or its effect is only moderate. By formalizing the matching dynamics as a mathematical model where agents adjust their selectivity to reach a desired matching rate, we show using both analytic techniques and numerical simulations that even tiny differences either in size or in mean target matching rates between the two groups can lead to near-diametrically opposite behavior in terms of selectivity. This result depends on the rate of encounters being sufficiently large (so that it makes sense for anyone to be selective), but it is highly robust to parameter choices otherwise.

The bipartite networks that we are modeling have the following properties: First, we are interested in cases where individuals form matches repeatedly (e.g., casual heterosexual relations) rather than exiting the market, permanently or temporarily, when they find a match (e.g., marriage market). However, we assume that individuals have a soft limit on the rate that they can form matches, in the sense that if one forms matches too frequently for their personal taste, they start becoming more selective. We model this as a \textit{target matching rate} which the individual tries to achieve by adjusting their selectivity. This is an essential feature of our model, as we are interested in what drives one side to be more selective than the other.

Additionally, we restrict our attention to matching with \textit{non-transferable utility}, i.e., when there is no form of currency (price, salary, rent, etc.) transferred between two matched individuals, which could be adjusted to make a match more attractive for one side. Finally, agents are allowed to have idiosyncratic preferences, that is, we do not assume that there is a single ranking of the individuals of one group upon which the opposite group agrees uniformly. Although in our base model preferences are entirely uncorrelated, we do consider an extension where some individuals are considered on average more attractive than others.

To summarize, we model situations with the following characteristics:
\begin{enumerate}
    \item\label{distinctGroups} There are two non-overlapping groups of agents who try to match across groups.
    \item\label{NTU} Utility is non-transferable.
    \item\label{repeatedMatches} Agents form matches repeatedly.
    \item\label{softLimits} Agents have soft limits on their matching rates.
    \item\label{idiosyncratic}Preferences are idiosyncratic.
\end{enumerate}
Examples satisfying these conditions include casual heterosexual relations, hiring interviews (where a match is defined as the candidate being \textit{interviewed} for a position), the housing market (where a match is an apartment visit), academic conferences (accepted submissions), and various paid services where the remuneration rate is fixed.

Our results can be summarized as follows: using a model that captures properties \labelcref{distinctGroups,NTU,repeatedMatches,softLimits,idiosyncratic}, we show the existence of a unique equilibrium which is globally attractive, i.e., no matter the initial conditions, each agent's selectivity converges to a fixed (agent-dependent) value. We also obtain an analytic expression for the equilibrium when agents of each group are homogeneous with respect to how often they would like to match (equal target matching rates), and a semi-analytic expression in the general case. In the former, special case, the result immediately shows that even tiny differences in target matching rates can drive the selectivity of the two groups to extreme opposites, even for equal group sizes. In the general case, we illustrate the same phenomenon through simulations: tiny differences in the \textit{average} target matching rates of the two groups can lead to near-diametrically opposite selectivities even with equal group sizes. Because convergence to a unique equilibrium is theoretically guaranteed even in the general case, the simulation results do not depend on the choice of initial conditions, which makes it safe to draw general conclusions. Our model predicts higher behavioral asymmetries for high rates of encounters, smaller asymmetries between \textit{recent} members of the two groups, and insensitivity to group size ratio up to a critical value with threshold behavior at that value.

Our results also reveal a surprising consequence of the dynamics in such networks: if the distributions of target matching rates of the two groups are overlapping, then there will be individuals in the in-demand group who are more promiscuous (have higher target matching rates) than many individuals in the opposite group. Yet even these promiscuous individuals of the in-demand group will end up being highly selective, while less promiscuous individuals of the opposite group will be much less selective. In other words, group membership is a stronger determinant of behavior than individual goals are. More generally, our results suggest that feedback mechanisms that are structurally embedded in matching networks can create large asymmetries in behavior which do not necessarily reflect equally large differences in goals.

\subsection{Relation to the matching literature in economics}

The study of matching has a long history in economics. Stable matching theory, as established by \citeA{gale1962college} (see also \citeNP{roth1992two}), deals with the question of what matching assignments have the property that no two individuals would prefer matching with one another instead of their assigned partners. It takes the point of view of a central planner who knows the preferences of every individual and needs to algorithmically find a match that satisfies the above property. Although the assumption of a central planner might be realistic in some cases, here we are interested in distributed markets, where each individual needs to go through the available options for themselves.\footnote{It is worth noting an interesting parallel to our main result in stable matching theory: \citeA{ashlagi2017unbalanced} show that even a single extra member on one side of the market can dramatically alter stable outcomes so that, in every stable matching, almost every member of the smaller group is matched with one of their top choices, while the larger group gets partners that are almost random in rank. While the setting and mechanics are very different, this is another illustration that matching markets can amplify small asymmetries.}

A more closely related literature is that of search-theoretic matching. Here there is no central planner, but people rather encounter others randomly over time and decide whether they accept them as partners (for a review see \citeNP{chade2017sorting}; for the non-transferable utility case specifically see also \citeNP{lauermann2025matching}). Within this literature, a distinguishing feature of our model is that people keep track of the rate that they have been forming matches, so that they may adjust their selectivity accordingly. In contrast, existing literature assumes agents to be `Markovian', with the past having no effect on their expected (future) utility function and consequently on their decisions.

Allowing for idiosyncratic preferences is also a distinguishing feature of our model. Arguably, most social systems involve individuals with heterogeneous preferences, yet the existing literature has largely failed to incorporate such heterogeneity. As \citeNP[pp. 517-518]{chade2017sorting} put it, ``the literature frontier assumes a common evaluation of agents, without a hint that beauty is in the eye of the beholder'', and the situation hasn't changed until today.\footnote{\citeA{adachi2003search,lauermann2014stable} do consider the case of idiosyncratic preferences and provide some results in the limit of vanishing search frictions, i.e., high rates of encounters. These are the only cases with idiosyncratic preferences reviewed by \citeA[Ch.~6]{lauermann2025matching}.}

Another difference is that standard treatments of search-theoretic matching deal with the case of fully rational individuals who have full knowledge of everyone's preferences. Our model, in contrast, has an interpretation of each individual being aware only of their own experience and adjusting their behavior over time based on this experience and their goals. In other words, our agents are boundedly rational and adjust their selectivity adaptively.

Finally, the economics literature specifies agent preferences which allows to characterize equilibria, but it typically does not specify any dynamics when the system is out of equilibrium. Here we take a dynamical systems approach, describing the evolution of the dynamic variables via differential equations, which allows us to study also whether equilibria are reached from arbitrary initial conditions. We prove our results under minimal assumptions on the functional form of these differential equations (subject to properties \labelcref{distinctGroups,NTU,repeatedMatches,softLimits,idiosyncratic} of the previous section), which makes them robust to specific modeling choices.

\section{Main results}

\subsection{Model description}
\label{sectionModel}

We consider two populations of individuals, which we will call male and female, with sizes $M$ and $N$ respectively. Time is continuous and people meet others of the opposite type randomly at a certain rate. Whenever two individuals meet, each decides whether they accept the other as a partner, and if they both accept, a match is formed. Matches do not last, so that paired individuals immediately return to the pool of available individuals. This assumption makes sure that the numbers of available individuals remain constant.

Every individual has a personal, constant \textit{target matching rate} denoted by $c_i>0$ and $d_j>0$ for the $i$-th male and $j$-th female, respectively. This is how often the individual would ideally form matches, assuming there were enough available partners. To achieve this ideal rate, individuals may adjust how selective they are. We model their selectivity as an acceptance probability, i.e., with what probability they accept a person they encounter as a partner, denoted by $a_i(t)$ and $b_j(t)$ for the $i$-th male and $j$-th female, respectively. In the base model, preferences are uncorrelated, i.e., beauty is entirely in the eye of the beholder, so that $a_i(t)$ characterizes the probability that the $i$-th male accepts \textit{any} female, and similarly for $b_j(t)$.

We assume that the acceptance probabilities $a_i(t)$ and $b_j(t)$ satisfy the differential equations

\begin{equation}
\label{mainEq1a}
    \td{a_i(t)}{t}=r \cdot \left[c_i-K\cdot a_i(t)\cdot \sum_jb_j(t)\right], \ \ \text{ subject to }a_i(t)\in [0,1],
\end{equation}
and
\begin{equation}
\label{mainEq1b}
    \td{b_j(t)}{t}=r \cdot \left[d_j-K\cdot b_j(t)\cdot \sum_ia_i(t)\right], \ \ \text{ subject to }b_j(t)\in [0,1],
\end{equation}
where $r$ is a positive constant. Unless otherwise specified, summation over $i$ will always be from $1$ to $M$, and summation over $j$ will be from $1$ to $N$.

The interpretation of these equations is the following: The probability that the $i$-th male individual matches with the $j$-th female in an interval of time $dt$ is equal to the probability they meet ($K\cdot dt$) times the probability they both accept each other ($a_i(t)\cdot b_j(t)$), that is, $K\cdot a_i(t)\cdot b_j(t)\cdot dt$. Hence $K\cdot a_i(t)\cdot b_j(t)$ is the instantaneous rate at which these two individuals match. Consequently, the term $K\cdot a_i(t)\cdot \sumdu{j}{}b_j(t)$ in \cref{mainEq1a} gives the instantaneous rate that the $i$-th male matches with \textit{any} female. If this rate is smaller than their target matching rate $c_i$, then $\td{a_i(t)}{t}$ is positive, so the acceptance probability $a_i(t)$ increases, or equivalently, the individual becomes less selective. Conversely, if the experienced matching rate ($K\cdot a_i(t)\cdot \sumdu{j}{}b_j(t)$) is larger than the target rate $c_i$, then $a_i(t)$ decreases, meaning that the individual becomes more selective.\footnote{Note that if we wanted to give a precise definition of selectivity, we could define it as $1-a_i(t)$ (or $1-b_j(t)$), i.e., the probability of \textit{rejecting} a potential partner. All our formal results, however, are stated directly with respect to the acceptance probabilities $a_i(t)$ and $b_j(t)$.} The parameter $r$ controls how fast this adjustment happens. The justification is entirely analogous for \cref{mainEq1b}. The restrictions $a_i(t)\in [0,1]$ and $b_j(t)\in [0,1]$ are needed because these variables represent probabilities. The precise way we handle the dynamics at the boundaries is detailed in \cref{sectionTechnical}, where all our technical results are gathered.

In \cref{mainEq1a,mainEq1b}, the rate that individuals update their selectivity is proportional to the distance of their actual matching rate from their target rate. We use this form here for simplicity, but our results hold under much less restrictive assumptions. In fact, barring some technical details, the only requirement is simply that individuals update their selectivity \textit{in the direction} of achieving their target matching rate (see \cref{sectionPreliminaries}). No assumptions are required regarding the rate of updating. Having a concrete form like the system of \cref{mainEq1a,mainEq1b} allows us to run simulations, but all our theoretical results are proved under the less restrictive condition.

\subsection{Analytical results}
\label{sectionAnalyticalResults}

The system of \cref{mainEq1a,mainEq1b} is an agent-based model, with each agent's behavior represented by a single variable ($a_i(t)$ or $b_j(t)$) that determines how selective they are. The question we ask is how selective they become in the long term. Specifically, does the system reach an equilibrium with each agent exhibiting a constant (but agent-dependent) selectivity, and if so, how selective are they? Once we answer this question, we can compare the distributions of selectivities in the two groups.

Our main result says that, barring a borderline case where the target matching rates on the male and female sides exactly balance out, the system has a unique equilibrium, which is reached from any initial conditions. Here and in what follows, we are using bold-face notation to denote mathematical vectors.

\begin{theorem}[Existence and uniqueness of equilibrium]
\label{theoremMain}
    If $\sum_ic_i\neq \sum_jd_j$, then the system of \cref{mainEq1a,mainEq1b} has a unique equilibrium $({\bf\hat a},{\bf\hat b})=(\hat a_1,\ldots ,\hat a_M,\hat b_1,\ldots \hat b_N)$, which is globally attractive.
\end{theorem}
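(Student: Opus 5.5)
The plan is to reduce the high-dimensional system to the two aggregate quantities $A(t)=\sum_i a_i(t)$ and $B(t)=\sum_j b_j(t)$, and to exploit a monotone (competitive) structure. At equilibrium each $\hat a_i$ is determined by $B$ alone: it solves $c_i=K\hat a_i B$ when this is feasible in $[0,1]$, and otherwise sits at the boundary. So we set $\hat a_i=\alpha_i(B):=\min\{1,\,c_i/(KB)\}$, and similarly $\hat b_j=\beta_j(A):=\min\{1,\,d_j/(KA)\}$. Equilibria are then in bijection with solutions $(A,B)$ of the scalar system
\[
A=F(B):=\sum_i \min\{1,c_i/(KB)\},\qquad B=G(A):=\sum_j \min\{1,d_j/(KA)\}.
\]
Here $F$ and $G$ are continuous and nonincreasing. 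On the interior region, $AB=\sum_i c_i/K$ from the first equation and $AB=\sum_j d_j/K$ from the second. This is exactly where the hypothesis $\sum_i c_i\neq\sum_j d_j$ enters.

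\textbf{Existence and uniqueness of the fixed point.} Consider the functions $BF(B)=\sum_i\min\{B,c_i/K\}$ and $AG(A)=\sum_j\min\{A,d_j/K\}$. Both are nondecreasing and concave, they are strictly increasing until saturation, and they equal $\sum c_i/K$ and $\sum d_j/K$ respectively at saturation. Assume without loss of generality that $\sum_i c_i<\sum_j d_j$. Then the product $P=AB$ at a fixed point must equal both $BF(B)$ and $AG(A)$. I will show that the female side must be unsaturated in the sense that some $b_j=1$, while $P$ equals $\sum c_i/K$ (all males interior). I expect this to give $P$, and then $A$ and $B$, uniquely by monotonicity. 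Concretely, substitute to get a single equation $A=F(G(A))$. The map $A\mapsto F(G(A))$ is nondecreasing, so uniqueness needs more than monotonicity: I will show that $A\mapsto F(G(A))/A$ is strictly decreasing on the relevant range, using concavity and the strict inequality $\sum c_i\neq\sum d_j$ to rule out a whole interval of fixed points. In the balanced case a continuum $AB=\text{const}$ appears, which is consistent with the theorem's exclusion of that case. Existence follows from the intermediate value theorem, since $F(G(A))/A\to\infty$ as $A\to0$ and is at most $M/A\to0$ as $A\to\infty$. Some care is needed at $A=0$, but $c_i>0$ pushes all $a_i$ away from $0$.

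\textbf{Global attraction.} This is the main obstacle. The system is cooperative within each group and competitive between groups: $\dot a_i$ decreases in each $b_j$, and $\dot b_j$ decreases in each $a_i$. Reversing the sign of the $b$ variables therefore makes it cooperative (monotone in the Kamke sense), including the boundary projection, provided it is defined as the standard clipping. I would first show that all trajectories enter a compact box bounded away from $0$. Indeed, $a_i\ge\min\{1,c_i/(KN)\}$ is eventually attracting, since $B\le N$. Then I would run the standard monotone squeeze. Take the extreme corners of the box, $\underline x=(\mathbf a_{\min},\mathbf 1)$ and $\overline x=(\mathbf 1,\mathbf b_{\min})$ in the order where $a$ increases and $b$ decreases. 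Their trajectories are monotone in time, so they converge to equilibria. Every trajectory is sandwiched between them, and uniqueness of the equilibrium forces both limits to coincide. The technical points to verify are the following. First, the clipped vector field still preserves order (it is quasimonotone, and clipping preserves this because it only acts where the variable hits $0$ or $1$). Second, a monotone trajectory from a sub-equilibrium point converges to an equilibrium, which follows because it is bounded and monotone and its limit is invariant. The same argument should apply under the paper's weaker "update in the direction of the target" assumption, since only sign and order-preservation properties are used.
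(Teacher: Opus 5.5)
\textbf{Equilibrium existence and uniqueness.} This part is essentially the paper's argument. The paper also reduces to a scalar fixed-point equation for the male aggregate; its $G$ and $H$ are your $F$ and $G$. It shows that $F_0(x)=F(x)/x$ is non-increasing on $(0,M]$, and that $F_0$ can be constant on an interval only where no agent on either side is saturated, where it equals $\sum_i c_i/\sum_j d_j\neq 1$. Existence then comes from the intermediate value theorem, as in your proposal.

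Your product functions give a cleaner version of the paper's piecewise derivative computation. Write $F(G(A))/A=\psi(G(A))/\phi(A)$ with $\psi(B)=\sum_i\min\{B,c_i/K\}$ and $\phi(A)=\sum_j\min\{A,d_j/K\}$. This is a non-increasing function over a non-decreasing one, so concavity is not actually needed.

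One aside is false, although nothing depends on it. When $\sum_i c_i<\sum_j d_j$, it does not follow that all males are interior and $AB=\sum_i c_i/K$. Take $M=N=2$, $K=1$, $c=(0.1,3)$, $d=(2,2)$. The unique equilibrium is $\hat b=(1,1)$, $\hat a=(0.05,1)$, so $AB=2.1\neq 3.1$. Only ``some $\hat b_j=1$'' is true.

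\textbf{Global attraction: a different route.} The paper never uses order preservation of the flow. It works with $\liminf$ and $\limsup$ of the aggregates $A,B$ and proves
$B^-\ge H(A^+)$, $B^+\le H(A^-)$, $A^-\ge G(B^+)$, $A^+\le G(B^-)$.
Each follows from a one-coordinate sign argument: once $A(t)<A^++\epsilon$, any $b_j$ below $\min\{d_j/(K(A^++\epsilon))-\epsilon,1\}$ is pushed upward at a uniform rate. Chaining these gives $F_0(A^-)\le 1\le F_0(A^+)$. Monotonicity of $F_0$ and uniqueness of its root then give $A(t)\to\hat A$, and the same sign argument recovers each component.

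Your Kamke squeeze is valid for the linear system in the theorem. $\dot a_i=r(c_i-Ka_iB)$ is non-increasing in each $b_j$ and vice versa, and clipping at $1$ preserves the quasimonotone condition. It gives componentwise convergence directly, with uniqueness of the equilibrium as the only input. You still need to supply what classical Kamke does not give for a discontinuous field:
\begin{itemize}
\item a comparison principle for the projected system, for instance via the Lipschitz and still cooperative penalized fields $f_k(x)-\lambda(x_k-1)_+$ with $\lambda\to\infty$;
\item continuity of the semiflow in the initial data, so that the limits of the corner trajectories are equilibria.
\end{itemize}
Also use the whole cube $[0,1]^{M+N}$, with corners $(\mathbf{0},\mathbf{1})$ and $(\mathbf{1},\mathbf{0})$, as the invariant order interval. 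Your smaller box is in general only approached asymptotically: if $b\equiv\mathbf{1}$, then $a_i\uparrow c_i/(KN)$ without ever reaching it. Trajectories that start outside that box are therefore not sandwiched.

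\textbf{Your closing generalization is wrong.} The paper's general class only requires $g_i,h_j$ to have the right sign. If some $g_i$ is increasing on part of $[0,c_i)$, then $\dot a_i$ is increasing in $B$ there, and quasimonotonicity fails. The paper's $\liminf/\limsup$ argument uses only the signs and continuity of $g_i,h_j$, which is exactly why it covers the general system while the monotone-systems route does not.
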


This theorem is saying that, barring the special case $\sum_ic_i=\sum_jd_j$, no matter the initial conditions, each agent's selectivity will converge over time to a certain value, which can be obtained from the parameters of the system (numbers of individuals, rate of encounters, and target matching rates). This means that, in order to understand what happens in the long run, we need only find the equilibrium values, that is, $\hat a_1,\ldots, \hat a_M,\hat b_1,\ldots ,\hat b_N$. The uniqueness of equilibrium fails in the case that the target matching rates on the two sides exactly balance out ($\sum_ic_i=\sum_jd_j$). Although it is still possible to characterize the long-term behavior in this case, we do not pursue this here, as this is a corner case that is unlikely to be exactly satisfied in any real system.

Returning to the generic case ($\sum_ic_i\neq \sum_jd_j$), the next result specifies the selectivities at equilibrium for homogeneous populations, i.e., populations where all males have equal target matching rates $c$ and all females have equal target matching rates $d$. Note that in that case, the assumption $\sum_ic_i\neq \sum_jd_j$ becomes $Mc\neq Nd$. Without loss of generality (by relabeling male vs. female if necessary), we may assume that $Mc>Nd$.

\begin{proposition}[Equilibrium values, homogeneous groups]
\label{propositionHomogeneousEq}
    Consider the system of \cref{mainEq1a,mainEq1b} and suppose that $c_i=c$ for all $i$, and $d_j=d$ for all $j$, where $c,d$ are some positive constants. Without loss of generality, assume that $cM>dN$. Then,
        \begin{itemize}
            \item $\hat a_i=1$ for all $i$.
            \item $\hat b_j=\min\{1,\frac{d}{KM}\}$, for all $j$.
        \end{itemize}
\end{proposition}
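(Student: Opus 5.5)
By \cref{theoremMain}, the hypothesis $cM>dN$ gives $\sum_ic_i=Mc\neq Nd=\sum_jd_j$, so the system has exactly one equilibrium. The plan is therefore not to solve for the equilibrium but to \emph{verify} that the proposed point $\hat a_i=1$, $\hat b_j=\hat b:=\min\{1,\frac{d}{KM}\}$ is an equilibrium; uniqueness then identifies it with $({\bf\hat a},{\bf\hat b})$. Because of the constraints $a_i,b_j\in[0,1]$, I will use the boundary convention of \cref{sectionTechnical}. A variable at an interior value is stationary iff its right-hand side vanishes. A variable sitting at the upper boundary $1$ is stationary iff its right-hand side is $\ge 0$, since the flow pushes it against the constraint and the projected dynamics keep it at $1$.

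First I check the female equations. At the candidate point $\sum_i\hat a_i=M$, so the right-hand side of \cref{mainEq1b} is $r[d-KM\hat b]$. I split into two cases.
\begin{itemize}
\item If $d\le KM$, then $\hat b=\frac{d}{KM}\in(0,1]$ and the bracket is exactly $0$. This is stationary whether $\hat b$ is interior or equal to $1$.
\item If $d>KM$, then $\hat b=1$ and the bracket is $d-KM>0$. This is admissible at the upper boundary.
\end{itemize}

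Next I check the male equations. With $\sum_j\hat b_j=N\hat b$, the right-hand side of \cref{mainEq1a} at $a_i=1$ is $r[c-KN\hat b]$, and I need this to be $\ge0$.
\begin{itemize}
\item In the first case, $KN\hat b=\frac{Nd}{M}<c$, using $cM>dN$.
\item In the second case, $KN\hat b=KN<\frac{dN}{M}<c$, using first $KM<d$ and then $cM>dN$.
\end{itemize}
So the bracket is in fact strictly positive, and each $a_i$ is held at $1$.

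The only delicate point is the boundary step: making sure the paper's precise treatment of the constrained dynamics in \cref{sectionTechnical} matches the stationarity criterion used above. I expect it does, with the projection onto $[0,1]$ cancelling outward-pointing components. If instead equilibria there are defined via a variational-inequality condition, the same sign checks verify it directly. Once this is settled, the proposition follows immediately from uniqueness in \cref{theoremMain}.
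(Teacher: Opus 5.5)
Your proof is correct, but it runs in the opposite direction from the paper's. You guess the point $\hat a_i=1$, $\hat b_j=\min\{1,\frac{d}{KM}\}$, check stationarity by signs, and then invoke the uniqueness part of \cref{theoremMain}. Your boundary worry is unfounded: the paper's convention at $x_i=1$ is exactly $\min\{f_i,0\}$. Your sign checks amount to verifying the ``if'' direction of \cref{GENERAL-propositionEquilibriumProperties}.

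The paper argues by necessity instead. From the fixed-point characterization in \cref{GENERAL-propositionEquilibriumProperties}, all $\hat a_i$ are equal and all $\hat b_j$ are equal. If $\hat a<1$, then $\hat a=\frac{c}{KN\hat b}$. Stationarity of each $b_j$ forces $d\geq KM\hat a\hat b=\frac{Mc}{N}$, which contradicts $cM>dN$.

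What the paper's route buys:
\begin{itemize}
\item It needs only the equilibrium characterization, not the global uniqueness result, whose proof goes through the monotonicity of $F_0$. Existence is supplied by \cref{theoremMain}.
\item It exposes the mechanism. If males were selective, the total match flow $Mc$ they demand would have to be absorbed by females, whose combined matching at equilibrium is capped at $Nd$.
\end{itemize}

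What your route buys: it relies on uniqueness, but it explicitly checks the consistency condition $c\geq KN\hat b$ in both regimes $d\le KM$ and $d>KM$. The paper's argument leaves that check implicit, inside the existence part of \cref{theoremMain}.

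Both arguments use only the signs of the update functions, so either one extends directly to the general system with arbitrary $g_i,h_j$.
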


The implications of this result are striking. Consider for simplicity that $M=N$. Then, the proposition says that even with a tiny difference in the target matching rates of males and females $c$ and $d$, all males will necessarily be driven to become non-selective ($a_i(t)\to \hat a_i=1$), i.e., `accept' any female as a possible match. In contrast, females will tend to accept males at rate $\frac{d}{KM}$, which is possibly very small. For example, if a female meets $KM=100$ males per unit of time and her target matching rate is $d=1$, then $\hat b_j=\frac{d}{KM}=0.01$. See \cref{fig:homogeneous}. In the case $d>KM$, i.e., if the target matching rate of females exceeds the rate they meet males, then females also become non-selective.

\begin{figure}[ht]
    \centering
    \includegraphics[width=\linewidth]{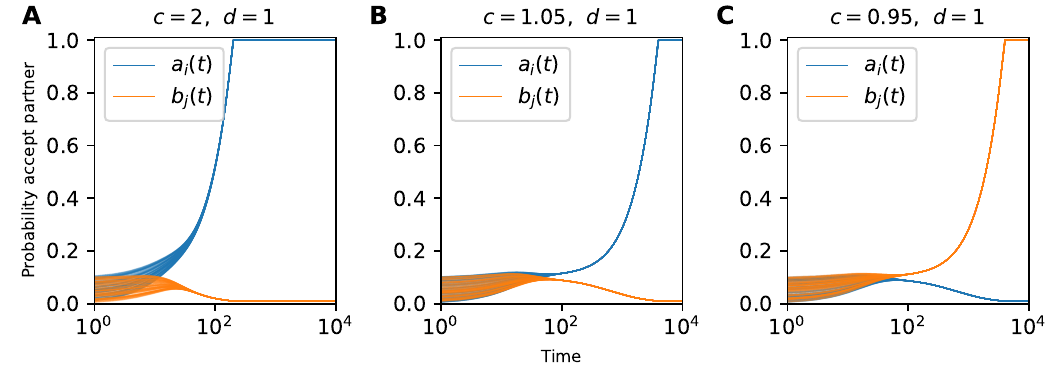}
    \caption{Simulation of the system of \cref{mainEq1a,mainEq1b} with homogeneous within-group target matching rates \textbf{(A)} males: $c=2$, females: $d=1$, \textbf{(B)} males: $c=1.05$, females: $d=1$, \textbf{(C)} males: $c=0.95$, females: $d=1$. In (A) and (B), all males are driven to eventually be non-selective (acceptance rate $a_i(t)\to 1$) and males highly selective ($b_j(t)\approx 0$). In (B), this is the case despite the highly similar target matching rates of males and females. The exact opposite behavior is observed in (C), where the target matching rates are still similar in size, but now smaller for males. Other parameters are $M=N=100$, $r=0.005$, $K=1$, and $a_i(0)$ and $b_j(0)$ are drawn uniformly randomly from the interval $(0, 0.1)$.}
    \label{fig:homogeneous}
\end{figure}

For heterogeneous populations, we can still find the equilibrium values semi-analytically, but the statement of the result is more involved. We present it here for completeness, but it may be skipped on a first reading (continue to \cref{sectionSimulations}).

To express the result in the general case, it is easiest if we assume, without loss of generality, that $c_1\leq \ldots \leq c_M$ and $d_1\leq\ldots\leq d_N$. We describe the equilibrium in two steps. First, we give an expression for the numbers of males and females whose acceptance probabilities at equilibrium are not saturated, i.e., they remain selective to at least some degree, denoted by $\hat i$ and $\hat j$. That is, we define
\begin{equation}
\label{defIstarJstar}
    \hat i=|\{i:\hat a_i<1\}|,\ \ \ \hat j=|\{j:\hat b_j<1\}|.
\end{equation}
The following proposition allows us to get these values directly from the parameters of the system.
\begin{proposition}
\label{propositionIstarJstarExpressions}
Assume $c_1\leq \ldots \leq c_M$, $d_1\leq\ldots\leq d_N$, and $\sum_ic_i\neq \sum_jd_j$, and define $\hat i$ and $\hat j$ through \cref{defIstarJstar}. We have
\begin{equation}
\label{expressionIstar}
    \hat i=\max\,\left\{i:\sum_j\min\,\left\{\frac{d_j}{(M-i)c_i+\sumdu{k=1}{i}c_{k}},\frac{K}{c_i}\right\}>1\right\}
\end{equation}
and
\begin{equation}
\label{expressionJstar}
    \hat j=\max\,\left\{j:\sum_i\min\,\left\{\frac{c_i}{(N-j)d_j+\sumdu{k=1}{j}d_{k}},\frac{K}{d_j}\right\}>1\right\},
\end{equation}
with the convention that $\max\,\emptyset=0$.
\end{proposition}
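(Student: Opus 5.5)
The plan is to reduce the equilibrium to a single scalar fixed-point equation in $B=\sum_j \hat b_j$, and then read off $\hat i$ as the number of $c_i$ lying below a threshold determined by $B$. By \cref{theoremMain} the equilibrium $({\bf\hat a},{\bf\hat b})$ exists and is unique, so it suffices to describe it; write $\hat A=\sum_i\hat a_i$ and $\hat B=\sum_j\hat b_j$.

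First I characterize each coordinate at equilibrium from \cref{mainEq1a,mainEq1b} with the boundary convention of \cref{sectionTechnical}.
\begin{itemize}
\item If $\hat a_i\in(0,1)$, then $c_i=K\hat a_i\hat B$.
\item $\hat a_i=0$ is impossible, since there the right-hand side equals $c_i>0$.
\item $\hat a_i=1$ is an equilibrium exactly when the drift does not point inward, i.e.\ $c_i\ge K\hat B$.
\end{itemize}
Hence
\[
\hat a_i=\min\{1,\,c_i/(K\hat B)\},\qquad \hat b_j=\min\{1,\,d_j/(K\hat A)\}.
\]
Here $\hat B>0$ because every $\hat b_j>0$. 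In particular $\hat a_i<1$ iff $c_i<K\hat B$. Since the $c_i$ are sorted, this set is an initial segment $\{1,\dots,\hat i\}$, even with ties. Therefore
\[
\hat i=\max\{i: c_i<K\hat B\}, \quad \text{equivalently } \hat i=\max\{i:\hat B>c_i/K\}.
\]

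Next I eliminate $\hat A$. For $B>0$ set $A(B)=\sum_i\min\{1,c_i/(KB)\}$ and
\[
g(B)=\frac1B\sum_j\min\Bigl\{1,\frac{d_j}{KA(B)}\Bigr\}=\sum_j\min\Bigl\{\frac1B,\frac{d_j}{KA(B)B}\Bigr\}.
\]
The equilibrium condition is exactly $g(\hat B)=1$. The key monotonicity observation is that
\[
KA(B)B=\sum_i\min\{KB,c_i\}
\]
is nondecreasing in $B$, while $1/B$ is strictly decreasing. So each summand of $g$ is nonincreasing. Moreover, each summand is strictly decreasing wherever its first entry is the active one, and where the second entry is active it is strictly decreasing once $KB<c_M$. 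I expect the main obstacle to be making this strict monotonicity airtight in the corner regimes, in particular where all males are saturated and $g$ may be flat. In that case I would argue directly, using the uniqueness in \cref{theoremMain} together with $\sum_ic_i\neq\sum_jd_j$, that $g$ crosses the level $1$ only once. Granting this, $\hat B>x$ iff $g(x)>1$ for every $x>0$ (with the appropriate handling of flat pieces).

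Finally I evaluate $g$ at the threshold $x=c_i/K$. Using the sorting,
\[
KA(x)x=\sum_k\min\{c_i,c_k\}=\sum_{k=1}^{i}c_k+(M-i)c_i,
\]
and $1/x=K/c_i$. Therefore
\[
g(c_i/K)=\sum_j\min\Bigl\{\frac{d_j}{(M-i)c_i+\sum_{k\le i}c_k},\,\frac{K}{c_i}\Bigr\}.
\]
Combined with the equivalence $c_i<K\hat B\iff g(c_i/K)>1$, this gives \cref{expressionIstar}, with $\max\emptyset=0$ corresponding to the case where all males are saturated. The formula \cref{expressionJstar} for $\hat j$ follows by the symmetric argument with the roles of the two groups exchanged.
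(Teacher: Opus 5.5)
Your proof is correct and follows the paper's route: your $g(B)=H(G(B))/B$ is the mirror image of the paper's $F_0(x)=G(H(x))/x$ (the paper treats $\hat b_j$ this way and invokes symmetry for $\hat a_i$), and, as in the paper, you use monotonicity plus a unique crossing of level $1$ to get $\hat a_i<1\iff g(c_i/K)>1$ and then evaluate at the threshold. Your worry about flat pieces is settled exactly as you suggest, and no strict monotonicity is needed: any root $B_1$ of $g=1$ yields an equilibrium with $\sum_j\hat b_j=B_1$ (set $\hat a_i=\min\{1,c_i/(KB_1)\}$ and then $\hat b_j$ accordingly), so uniqueness of the equilibrium gives a unique root, and a nonincreasing $g$ with a unique root at $\hat B$ satisfies $g(x)>1\iff x<\hat B$.
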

The outer sum in \cref{expressionIstar} is decreasing in $c_i$, so efficient algorithms can be used to find the maximum index for which this sum is greater than $1$, and similarly for \cref{expressionJstar}.

Once we have $\hat i$ and $\hat j$, the equilibrium can be found as follows:

\begin{proposition}[Equilibrium values, general]
\label{propositionHeterogeneousEq}
    Consider the system of \cref{mainEq1a,mainEq1b} and suppose that $c_1\leq \ldots \leq c_M$, $d_1\leq\ldots\leq d_N$, and $\sum_ic_i\neq \sum_jd_j$. Define $\hat i$ and $\hat j$ through \cref{defIstarJstar} and further denote
    \begin{align}
    \label{defCDL}
        C^* =\sum_{i\leq \hat i}c_i, \hspace{2em} D^*=\sum_{j\leq \hat j}d_j,\hspace{2em} L =(D^*-C^*)-K(M-\hat i)(N-\hat j).
    \end{align}
    Then, the unique equilibrium $({\bf \hat a},{\bf \hat b})$ satisfies
    \begin{equation}
    \label{eq43}
        \hat b_j=\min\,\left\{\frac{d_j}{Kx^*},1\right\},\ \text{ and } \ \hat a_{i}=\min\,\left\{\frac{c_i}{K\sum_j\hat b_j},1\right\},
    \end{equation}
    where
    \begin{equation}
    \begin{aligned}
        x^*=\left\{\begin{array}{cc}
            \frac{-L+\sqrt{L^2+4KD^*(M-\hat i)(N-\hat j)}}{2K(N-\hat j)}, & \text{ if }\hat j<N\\ \\
            \frac{(M-\hat i)D^*}{D^*-C^*}, & \text{ if }\hat j=N.
        \end{array}
        \right.
    \end{aligned}
    \end{equation}
\end{proposition}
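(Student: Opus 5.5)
We take Theorem~\ref{theoremMain} and Proposition~\ref{propositionIstarJstarExpressions} as given, so there is a unique equilibrium $({\bf\hat a},{\bf\hat b})$ and $\hat i,\hat j$ are known. The plan is to write down the equilibrium conditions, reduce them to a single scalar equation in one aggregate variable, and solve that equation explicitly.

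\textbf{Step 1: equilibrium conditions.} At a rest point of \cref{mainEq1a}, either $\hat a_i<1$ and $c_i=K\hat a_i\sum_j\hat b_j$, or $\hat a_i=1$ and $c_i\ge K\sum_j\hat b_j$. In both cases
\[
\hat a_i=\min\Big\{\tfrac{c_i}{K\sum_j\hat b_j},1\Big\}.
\]
(An interior zero $\hat a_i=0$ is impossible since $c_i>0$.) Define $x^*=\sum_i\hat a_i$. The same argument applied to \cref{mainEq1b} gives $\hat b_j=\min\{d_j/(Kx^*),1\}$. This yields both formulas in \cref{eq43}, and the remaining task is to identify $x^*$.

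\textbf{Step 2: use the ordering.} Since $c_i$ and $d_j$ are sorted, the quantity $c_i/(K\sum_j\hat b_j)$ is monotone in $i$. Hence the unsaturated males are exactly $i\le\hat i$, and likewise the unsaturated females are exactly $j\le\hat j$. Let $y=\sum_j\hat b_j$. For $i\le\hat i$ we have $\hat a_i=c_i/(Ky)$, and for $j\le \hat j$ we have $\hat b_j=d_j/(Kx^*)$. Summing, and recalling the definitions in \cref{defCDL},
\[
x^*=\frac{C^*}{Ky}+(M-\hat i),\qquad y=\frac{D^*}{Kx^*}+(N-\hat j).
\]

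\textbf{Step 3: the scalar equation.} Multiply the first relation by $Ky$ and substitute the second, writing $Ky=D^*/x^*+K(N-\hat j)$. This gives
\[
x^*\Big(\tfrac{D^*}{x^*}+K(N-\hat j)\Big)=C^*+(M-\hat i)\Big(\tfrac{D^*}{x^*}+K(N-\hat j)\Big).
\]
Multiplying through by $x^*$ produces the quadratic
\[
K(N-\hat j)\,x^{*2}+\big(D^*-C^*-K(M-\hat i)(N-\hat j)\big)x^*-(M-\hat i)D^*=0,
\]
that is, $K(N-\hat j)x^{*2}+Lx^*-(M-\hat i)D^*=0$. The two cases are then handled as follows.

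\begin{itemize}
\item If $\hat j<N$, the leading coefficient is positive and the constant term is $\le 0$. The product of the roots is therefore nonpositive, so there is exactly one nonnegative root, namely the "$+$" root in the stated formula. Since $x^*=\sum_i\hat a_i>0$, it must equal this root.
\item If $\hat j=N$, the equation is linear: $(D^*-C^*)x^*=(M-\hat i)D^*$, which gives the stated formula.
\end{itemize}

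\textbf{Main obstacle.} The delicate part is the degenerate subcase of $\hat j=N$ in which $M-\hat i=0$. Then the linear equation reads $(D^*-C^*)x^*=0$ with $x^*>0$, which forces $D^*=C^*$. This contradicts $\sum_i c_i\neq\sum_j d_j$, because in this subcase $C^*=\sum_i c_i$ and $D^*=\sum_j d_j$. So this configuration cannot occur, and we must verify that in the remaining cases $D^*-C^*\neq0$, so the division is legitimate.

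This requires a short consistency check: if $\hat j=N$ and $\hat i<M$, then $x^*=(M-\hat i)D^*/(D^*-C^*)$ must be positive, so we need $D^*>C^*$. I would verify this using $x^*\ge M-\hat i$, which holds because the saturated males each contribute $1$. Uniqueness from Theorem~\ref{theoremMain} then guarantees that these formulas describe the equilibrium.
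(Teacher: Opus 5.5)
Your proposal is correct and is essentially the paper's argument. Summing the equilibrium conditions over the blocks $i\le\hat i$, $j\le\hat j$ and eliminating $y$ is exactly the equation $F_0(x^*)=1$ that the paper writes out via its auxiliary lemma on $F_0$, and it gives the same quadratic and the same root selection. Your treatment of $\hat j=N$ is slightly more careful than the paper's: you rule out $\hat i=M$, and $D^*>C^*$ then follows directly from $(D^*-C^*)x^*=(M-\hat i)D^*>0$, with no need for $x^*\ge M-\hat i$.

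One small wording fix: when $(M-\hat i)D^*=0$ (e.g.\ $\hat j=0$), the ``$-$'' root equals $0$, so both roots can be nonnegative. The accurate statement is that the ``$-$'' root is always $\le 0$, and hence $x^*>0$ must be the ``$+$'' root.
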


\subsection{Simulations}
\label{sectionSimulations}

\Cref{propositionHomogeneousEq} showed that for homogeneous populations, even minuscule differences in the total target matching rate of the two groups ($Mc$ vs. $Nd$) can result in one side becoming highly selective while the other becomes entirely non-selective. In contrast, for heterogeneous populations, although \cref{propositionHeterogeneousEq} gives the equilibrium values, it is not immediately informative regarding whether small differences in the distributions of the target matching rates of the two groups can lead to large differences in the selectivity distributions. We examine this question using numerical simulations.

\Cref{fig:heterogeneous} shows a simulation with 100 males and 100 females ($M=N=100$), and with the target matching rates $c_i$ and $d_j$ drawn from uniform distributions. Specifically we draw the $c_i$ i.i.d. from a uniform distribution on the interval $[0,2.5]$ and the $d_j$ i.i.d. from a uniform distribution on $[0,2]$. This results in the $c_i$'s having a slightly larger mean than the $d_j$'s ($\bar c:=\frac{1}{M}\cdot \sum_ic_i=1.25$ vs. $\bar d:=\frac{1}{N}\cdot \sum_jd_j=1$), but with the two distributions highly overlapping (\cref{fig:heterogeneous}A).

\begin{figure}
    \centering
    \includegraphics[width=\linewidth]{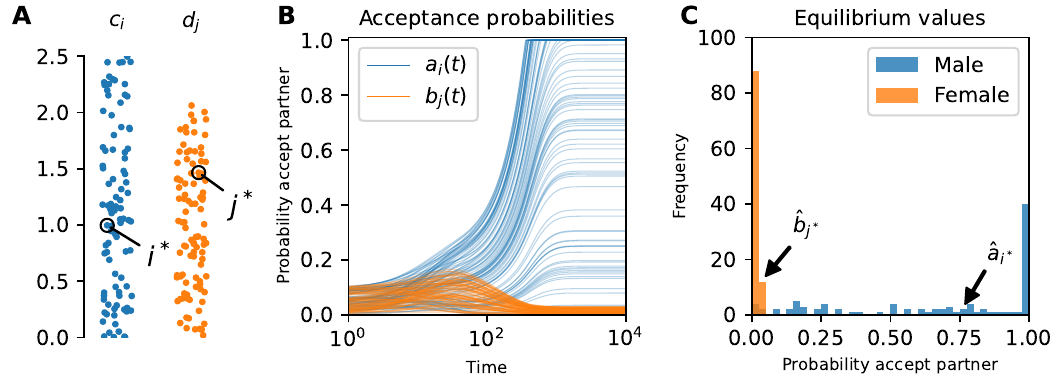}
    \caption{\textbf{Simulation of the system of \cref{mainEq1a,mainEq1b} for heterogeneous populations.} \textbf{(A)} We draw the $c_i$'s i.i.d. from a uniform distribution on $(0,2.5)$ and the $d_j$'s i.i.d. from a uniform distribution on $(0,2)$, resulting in $\E{c_i}=1.25$ and $\E{d_j}=1$. Other parameter values are as in \cref{fig:homogeneous}. \textbf{(B)} Trajectories of acceptance probabilities $a_i(t)$ and $b_j(t)$. \textbf{(C)} Histogram of equilibrium values $\hat a_i$ and $\hat b_j$. Despite the highly overlapping target matching rate distributions, selectivity at equilibrium is highly polarized between the two groups. Even for a specific pair of individuals $i^*$, $j^*$, for which the female has higher target matching rate than the male ($c_{i^*}<d_{j^*}$; circles in A), the female ends up becoming much more selective ($\hat a_{i^*}>\hat b_{j^*}$; arrows in C).}
    \label{fig:heterogeneous}
\end{figure}
As \cref{fig:heterogeneous} illustrates, despite the overlapping target matching rate distributions (\cref{fig:heterogeneous}A), the selectivity distributions at equilibrium are highly distinct (\cref{fig:heterogeneous}C). Specifically, all females are highly selective at equilibrium ($\hat b_j\leq 0.05$), while males exhibit a large variation in their selectivity, with about $40\%$ of them being non-selective ($\hat a_i=1$) and only a tiny fraction exhibiting selectivity comparable to the female population ($\hat a_i\leq 0.05$). The mean acceptance probabilities at equilibrium are $\frac{1}{M}\cdot \sum_i\hat a_i\approx 0.712$ and $\frac{1}{N}\cdot \sum_j\hat b_j\approx 0.014$.

\subsubsection{Group membership outweighs individual goals}

The fact that overlapping distributions of target matching rates can give rise to polarized distributions of equilibrium selectivities has an important consequence: if we pick a specific pair of individuals, a male $i^*$ and a female $j^*$, even if $j^*$ has a higher target matching rate ($d_{j^*}>c_{i^*}$; see circles in \cref{fig:heterogeneous}A), with high probability $j^*$ will nevertheless be more selective at equilibrium ($\hat b_{j^*}<\hat a_{i^*}$; arrows in \cref{fig:heterogeneous}C). This would be entirely unexpected if one considered only individual target matching rates. Without taking into account the feedback dynamics that lead to the highly polarized behavior, one would expect that a more `promiscuous' individual would be less selective (or at least only slightly more selective given the small differences in mean target matching rates of the two groups). But in fact, even promiscuous individuals of the in-demand group end up being highly selective, while the opposite is true for the other group. In other words, group membership determines behavior to a much greater extent than individual goals do.

\subsubsection{Threshold behavior of selectivity polarization}

If we make the distributions of the target matching rates $c_i$ and $d_j$ more similar to one another, one would expect the distributions of acceptance probabilities at equilibrium ($\hat a_i$ and $\hat b_j$) to also become more similar. However, making the distributions of $c_i$ and $d_j$ more similar has a virtually negligible effect on the equilibrium acceptance probabilities, as long as $\bar c>\bar d$: the large differences in acceptance probability persist even for target matching rates that differ on average by only a few percentage points (\cref{fig:varyingMeanC}A-C). Qualitative changes appear only when $\bar c\approx \bar d$ (\cref{fig:varyingMeanC}D-E) and the polarity reverses as soon as $\bar c<\bar d$ (\cref{fig:varyingMeanC}F).\footnote{Note that when we have exactly $\bar c=\bar d$, the assumptions of \cref{theoremMain} fail and we no more have a unique equilibrium. Although we do not show it here, the trajectories do converge, but the outcome depends on the initial conditions.} \Cref{fig:varyParams}A shows the mean equilibrium acceptance probablities (mean $\hat a_i$ and mean $\hat b_j$) as a function of $\bar c$.

\begin{figure}
    \centering
    \includegraphics[width=\linewidth]{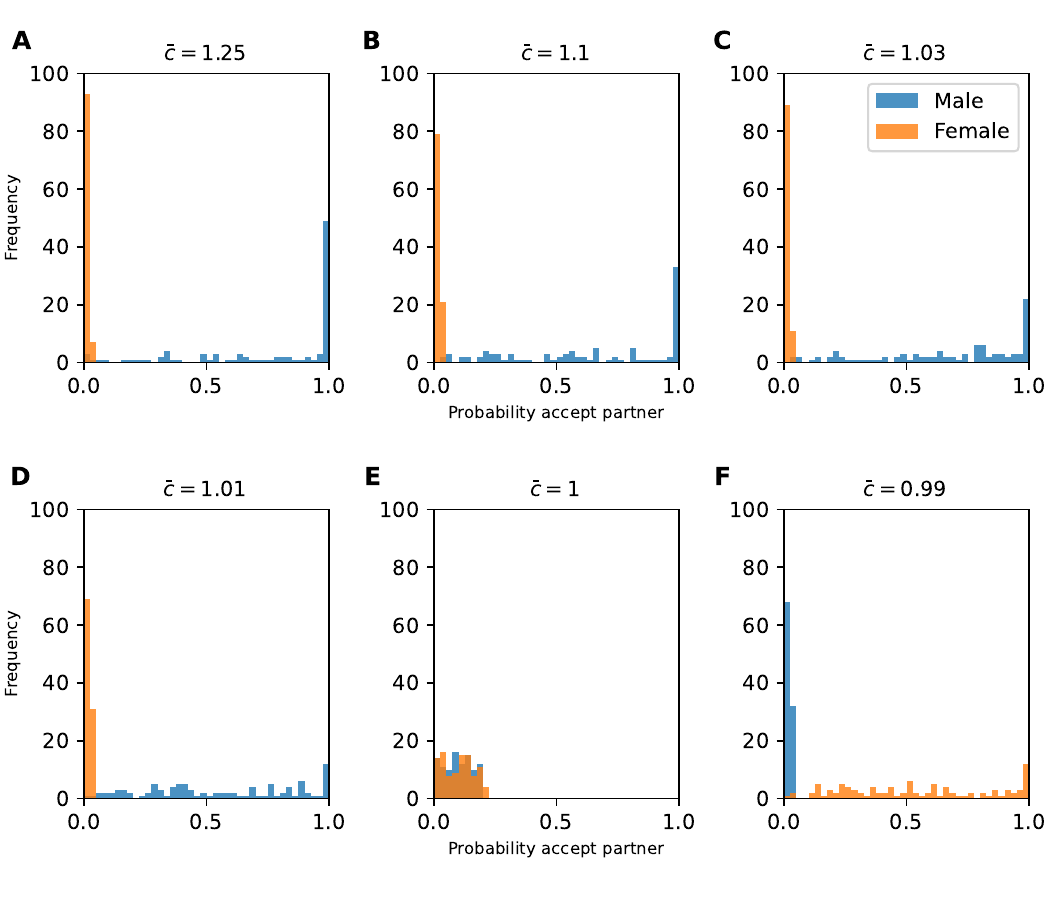}
    \caption{\textbf{Simulation of the system of \cref{mainEq1a,mainEq1b} with different mean target matching rates for males.} \textbf{(A-F)} The $c_i$'s are drawn i.i.d. from a uniform distribution on $(0,2\bar c)$, with $\bar c$ indicated at the top of each panel. In all cases, $\bar d=1$. All other parameters are as in \cref{fig:homogeneous}. The equilibrium selectivity distributions change very little until $\bar c\approx \bar d=1$, and they reverse when $\bar c$ falls below that value. The parameter values in (A) are identical to those in \cref{fig:heterogeneous} and they are reproduced for comparison purposes.}
    \label{fig:varyingMeanC}
\end{figure}

\begin{figure}[ht!]
    \centering
    \includegraphics[width=\linewidth]{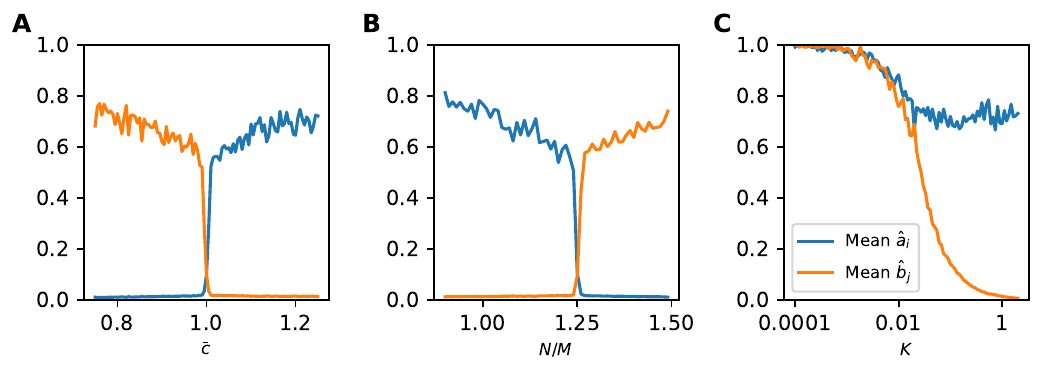}
    \caption{Mean acceptance probabilities at equilibrium of males and females ($\frac{1}{M}\cdot \sum_i\hat a_i$, $\frac{1}{N}\cdot \sum_i\hat b_j$) as we vary \textbf{(A)} mean target matching rate $\bar c$, \textbf{(B)} group size ratio $N/M$, \textbf{(C)} $K$. In (B), we keep $M=100$ fixed and vary $N$. Any parameters not explicitly varied are set as in \cref{fig:heterogeneous}.}
    \label{fig:varyParams}
\end{figure}
So far we have kept the sizes of the two groups of agents equal. Figures \labelcref{fig:varyParams}B and \labelcref{fig:varyingN} show what happens if we keep the distributions of the $c_i$'s and $d_j$'s constant (uniform with means $\bar c=1.25$ and $\bar d=1$) but vary the population ratio of males vs. females. We again see that equilibrium acceptance probabilities barely shift except when $N/M$ crosses the value $1.25$. Observe that this is equal to the ratio $\bar c/\bar d$, that is, the polarity changes abruptly when the group size ratio exceeds the (inverse) ratio of target matching rates.

\subsubsection{Selectivity of `in demand' group sensitive to rate of encounters.}

A particularly important parameter is the encounter rate, that is, the rate that agents meet others of the opposite group. The fact that this is a crucial parameter becomes evident once we note that if one is meeting people at a rate smaller than their target matching rate, then they would never be able to achieve their target, even if they matched with everyone they met. In the simulations so far, we have used $K=1$, which means that if there are $M=N=100$ agents of each type, then each individual agent is meeting $KM=KN=100$ agents of the other type per unit of time. This is much higher than the target matching rates, which we have drawn from distributions with means in the order of $\bar c\approx\bar d\approx 1$ match per unit of time. We would thus expect to see a qualitatively different outcome if $K$ was reduced by two orders of magnitude.

Figures \labelcref{fig:varyParams}C  and \labelcref{fig:varyingK} confirm this intuition. In fact, the acceptance probabilities of the `in demand' group (females in these simulations) rise as we decrease $K$, and they approach $1$ when $K<0.01$ (corresponding to each female meeting only $KN=1$ male per unit of time), while the acceptance probabilities of males increase to $1$ for $K<0.01$, but remain relatively unchanged otherwise. This underlines the fact that in order to observe the polarized distributions of selectivity, encounters need to occur frequently compared to the target matching rates. Reducing the encounter rate affects primarily the behavior of the `in demand' side.

\subsection{Extension with variable attractiveness}

In our base model, the acceptance probability of each individual did not depend on \textit{which} individual of the opposite type they encountered; as a result, there were no differences in the chances of different males or different females being accepted. In other words, all agents of each group were equally attractive. Here we introduce different levels of attractiveness.

A simple way to do this is to associate with each individual an attractiveness factor, $u_i$ for males and $v_j$ for females, taking values in $(0,1)$, which modulates the acceptance probability of the person by anyone of the opposite type. More precisely, we assume that the probability that male $i$ accepts female $j$ as a partner is $a_i(t)\cdot v_j$, and similarly the probability of female $j$ accepting male $i$ as a partner is $b_j(t)\cdot u_i$. Then, the rate of matching between the $i$-th male with any female becomes \begin{equation}
    \sum_jKa_i(t)v_jb_j(t)u_i=Ku_ia_i(t)\cdot \sum_jv_jb_j(t).
\end{equation}
Assuming again that individuals adapt their selectivity in order to reach their target matching rate, we may consider the system of equations

\begin{align}
\label{varyAttractivenessEq1a}
    \td{a_i(t)}{t} & =r \cdot \left[c_i-K\cdot u_ia_i(t)\cdot \sum_jv_jb_j(t)\right], \ \ \text{ subject to }a_i(t)\in [0,1],\\
\label{varyAttractivenessEq1b}
    \td{b_j(t)}{t} & =r \cdot \left[d_j-K\cdot v_jb_j(t)\cdot \sum_iu_ia_i(t)\right], \ \ \text{ subject to }b_j(t)\in [0,1].
\end{align}

If we make the change of variables $a_i'=u_ia_i$ and $b_j'=v_jb_j$, then the system becomes identical to \cref{mainEq1a,mainEq1b}, except that the allowed intervals for $a_i'$ and $b_j'$ are $[0,u_i]$ and $[0,v_j]$, respectively, instead of $[0,1]$. All convergence results continue to hold unchanged, except for the values of the equilibria.

\begin{figure}
    \centering
    \includegraphics[width=\linewidth]{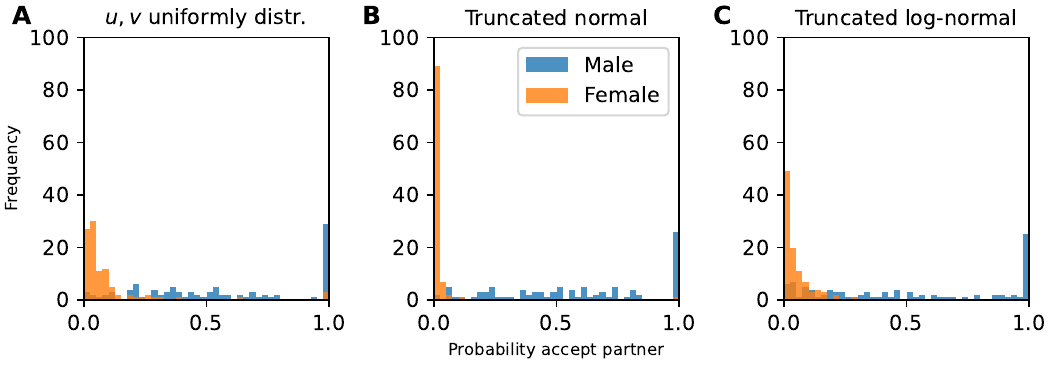}
    \caption{Equilibrium distributions from a simulation of the system of \cref{varyAttractivenessEq1a,varyAttractivenessEq1b}, with $u$ and $v$ drawn i.i.d. from \textbf{(A)} a uniform distribution on $(0,1)$. \textbf{(B)} /a normal distribution with $\mu=0.5$, $\sigma=0.25$, truncated to stay in the interval $(0.001,1)$, and \textbf{(C)} a log-normal distribution with $\mu=-2$, $\sigma=1$, capped at $1$. In each case, for the rate of encounters we set $K=(\bar u\bar v)^{-1}$, to account for the fact that successful encounters scale with both $\bar u$ and $\bar v$.}
    \label{fig:varyAttractiveness}
\end{figure}

\Cref{fig:varyAttractiveness} shows the equilibrium distributions when $u$ and $v$ are drawn from either a uniform distribution (\Cref{fig:varyAttractiveness}A), from a truncated normal (\Cref{fig:varyAttractiveness}B) or from a truncated log-normal (\Cref{fig:varyAttractiveness}C). A log-normal distribution creates the highest within-group inequality in attractiveness, because the attractiveness of the majority of individuals will be concentrated near low values, with a few being much more attractive. In all cases, we scale the rate of encounters $K$ by the inverse of the mean attractiveness of each group $\bar u$ and $\bar v$ to cancel out the effect of $u_i$ and $v_j$ taking values in $(0,1)$, which systematically decreases the chances of accepting someone as a partner.

As before, in all cases, the acceptance probability distributions at equilibrium differ markedly between the two groups. In the case of uniform attractiveness (\Cref{fig:varyAttractiveness}A), the difference is mildly less pronounced than before, with some females having less extremely low acceptance probabilities, although still quite low, with the exception of a small proportion of them. A smaller such effect is also visible in the case of the truncated log-normal (\Cref{fig:varyAttractiveness}C). Despite the high inequality in attractiveness in the case of the log-normal, the resulting distribution does not differ substantially from the case of constant attractiveness (\cref{fig:heterogeneous}).

\section{Proofs of mathematical results}
\label{sectionTechnical}

In this section we generalize the model described in \cref{sectionModel} and provide the proofs of the results of \cref{sectionAnalyticalResults} for this more general model. The reader who is not interested in the technical part may skip to the Discussion.

\subsection{Preliminaries}
\label{sectionPreliminaries}
To begin with, recall that in \cref{mainEq1a,mainEq1b} we said that the differential equations are subject to $a_i(t),b_j(t)\in [0,1]$. By this, we mean that if $a_i(t)$ (or $b_j(t)$) were to take a value above $1$, then we would set it to $1$. Observe that $a_i(t)$ can never fall below $0$, because $ \td{a_i(t)}{t}$ is always positive when $a_i(t)=0$. Therefore, \cref{mainEq1a,mainEq1b} can be more precisely written as (we suppress the dependence on $t$ for better readability) 

\begin{align}
\label{mainEq12a}
    \td{a_i}{t} & =\left\{\begin{array}{cc}
     r\cdot\left(c_i-K\cdot a_i\cdot \sum_jb_j\right), & \ \text{ if }a_i<1, \\
     \min\, \left\{r\cdot\left(c_i-K\cdot a_i\cdot \sum_jb_j\right),\ 0\right\}, & \ \text{ if }a_i=1.
    \end{array} \right.\\
\label{mainEq12b}
    \td{b_j}{t} & =\left\{\begin{array}{cc}
     r\cdot\left(d_j-K\cdot b_j\cdot \sum_ia_i\right), & \ \text{ if }b_j<1, \\
     \min\, \left\{r\cdot\left(d_j-K\cdot b_j\cdot \sum_ia_i\right),\ 0\right\}, & \ \text{ if }b_j=1.
    \end{array} \right.
\end{align}

In fact, we are going to prove our results for a more general system. Instead of assuming that agents adjust their acceptance probabilities \textit{linearly} with respect to how far their actual matching rate is from their target, we will allow them to use any rate of adjustment, subject to this adjustment always being in the direction of achieving their target rate. More precisely, we consider the system of equations
\begin{align}
\label{mainEq2a}
    \td{a_i}{t} & =\left\{\begin{array}{cc}
     g_i\left(K\cdot a_i\cdot \sum_jb_j\right), & \ \text{ if }a_i<1, \\
     \min\, \left\{g_i\left(K\cdot a_i\cdot \sum_jb_j\right),\ 0\right\}, & \ \text{ if }a_i=1.
    \end{array} \right.\\
\label{mainEq2b}
    \td{b_j}{t} & =\left\{\begin{array}{cc}
     h_j\left(K\cdot b_j\cdot \sum_ia_i\right), & \ \text{ if }b_j<1, \\
     \min\, \left\{h_j\left(K\cdot b_j\cdot \sum_ia_i\right),\ 0\right\}, & \ \text{ if }b_j=1,
    \end{array} \right.
\end{align}
where for all $i\in \{1,\ldots,M\}$ and $\forall j\in \{1,\ldots,N\}$ we have:
\begin{itemize}
    \item $g_i,h_j:[0,\infty)\to\R $ are Lipschitz continuous.
    \item $g_i$ is positive on $[0,c_i)$, negative on $(c_i,\infty)$, and $g_i(c_i)=0$, for some $c_i>0$.
    \item $h_j$ is positive on $[0,d_j)$, negative on $(d_j,\infty)$, and $h_j(d_j)=0$, for some $d_j>0$.
\end{itemize}
The system of \cref{mainEq12a,mainEq12b} is clearly a special case of \cref{mainEq2a,mainEq2b}, with $g_i(x)=r\cdot (c_i-x)$ and $h_j(x)=r\cdot (d_j-x)$.

An equivalent way to express the above system is as a \textit{projected dynamical system} on the unit cube $[0,1]^{M+N}$ \cite{nagurney1996projected}. Specifically, let us denote
\begin{equation}
\label{GENERAL-mainEqSingle}
    x_i=\left\{\begin{array}{cc}
         a_i, & i=1,\ldots ,M ,\\
         b_{i-M}, & i=M+1,\ldots ,M+N
    \end{array} \right.
\end{equation}
and
\begin{equation}
\label{GENERAL-defFi}
    f_i(\bm x)=\left\{\begin{array}{cc}
         g_i\left(K\cdot x_i\cdot \sumdu{j=M+1}{M+N}x_j\right), & i=1,\ldots, M, \\
         h_{i-M}\left(K\cdot x_{i}\cdot \sumdu{j=1}{M}x_j\right), & i=M+1,\ldots, M+N. 
    \end{array} \right.
\end{equation}
Then, \cref{mainEq2a,mainEq2b} can be more concisely written as
\begin{equation}
\label{GENERAL-eqDynamicsUniform}
    \td{x_i}{t}=\left\{\begin{array}{cc}
         f_i(\bm x), & \ \text{ if }x_i<1, \\
         \min\, \{f_i(\bm x),0\}, & \ \text{ if }x_i=1.
    \end{array} \right.
\end{equation}

Now let $P(x)$ denote the projection on $[0,1]^{M+N}$, that is,
\begin{equation}
    P(x)=\left\{\begin{array}{cc}
         x, & x\in [0,1]^{M+N},\\
         \underset{z}{\arg\min}\,||z-x||, & x\notin [0,1]^{M+N},
    \end{array}
    \right.
\end{equation}
where $||\cdot||$ denotes the standard Euclidean norm. Define also the projection of the vector $v\in \R^{M+N}$ at $x\in[0,1]^{M+N}$ (with respect to $[0,1]^{M+N}$) by 
\begin{equation}
    \Pi(x,v)=\limd{\delta\to 0}\frac{P(x+\delta v)-x}{\delta}.
\end{equation}
Intuitively, $\Pi (x,\cdot)$ takes a `velocity' vector centered at $x$ and modifies it minimally so that the trajectory stays within $[0,1]^{M+N}$. \Cref{GENERAL-eqDynamicsUniform} can be written as
\begin{equation}
\label{GENERAL-eqProjectedEquation}
    \td{x}{t}=\Pi({\bf x},{\bf f}({\bf x})).
\end{equation}
which is a projected differential equation (on $[0,1]^{M+N}$). Because of possible discontinuities of the vector field on the boundaries $x_i=1$, classical results for solutions do not apply. However, because $f$ is Lipschitz on $[0,1]^{M+N}$, standard results in projected dynamical systems theory guarantee that, for any initial condition, the system exhibits a forward unique \textit{Carath\'eodory} solution, i.e., a trajectory $\bm x(t)$ that is differentiable for almost every $t\geq 0$ and, for those $t$ for which ${\td{{\bf x}(t)}{t}}$ exists, it satisfies \cref{GENERAL-eqProjectedEquation} \cite[Theorem 2.5]{nagurney1996projected}. That is:

\begin{proposition}
\label{GENERAL-existenceTheorem}
For any initial condition ${\bf x_0}\in [0,1]^{M+N}$, the system defined by \cref{GENERAL-eqProjectedEquation} admits a unique Carath\'eodory solution ${\bf x}:[0,\infty)\to  [0,1]^{M+N}$ with ${\bf x}(0)={\bf x_0}$.
\end{proposition}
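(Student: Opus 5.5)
The plan is to reduce the statement to \cite[Theorem 2.5]{nagurney1996projected}, which gives existence and forward uniqueness of Carath\'eodory solutions on $[0,\infty)$ for projected differential equations $\td{x}{t}=\Pi(x,F(x))$. That theorem applies when the constraint set is a closed convex polyhedron and $F$ is Lipschitz with linear growth. So I would verify three things, in this order: the constraint set qualifies, ${\bf f}$ (suitably extended) satisfies the hypotheses, and the projected equation \cref{GENERAL-eqProjectedEquation} coincides with the componentwise system \cref{GENERAL-eqDynamicsUniform}.

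First, $[0,1]^{M+N}$ is a compact convex polyhedron, so the geometric hypothesis holds. For the vector field, each $f_i$ is a composition $g_i\circ\phi_i$ (or $h_{i-M}\circ\phi_i$) with $\phi_i({\bf x})=K x_i\sum_j x_j$. On the cube, $\phi_i$ is smooth with gradient bounded by a constant depending only on $K,M,N$, and it takes values in the compact interval $[0,K\max\{M,N\}]$, where $g_i$ and $h_j$ are Lipschitz. Hence ${\bf f}$ is Lipschitz on the cube. To meet the global requirements of the cited theorem, I would replace ${\bf f}$ by ${\bf f}\circ P$, where $P$ is the $1$-Lipschitz projection onto the cube. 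This extension is globally Lipschitz and bounded, so it has linear growth, and it agrees with ${\bf f}$ wherever the dynamics actually live.

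Next, I would check that \cref{GENERAL-eqProjectedEquation} is exactly \cref{GENERAL-eqDynamicsUniform}. Because the cube is a product of intervals, $P$ acts coordinatewise, and hence so does $\Pi$. For $x_i\in(0,1)$ the component is $f_i$. For $x_i=1$ it is $\min\{f_i,0\}$. For $x_i=0$ it is $\max\{f_i,0\}$, and this equals $f_i$ because $f_i({\bf x})=g_i(0)>0$ (respectively $h_{i-M}(0)>0$) by the sign assumptions. This matches the remark that trajectories never reach $0$ from inside. The theorem then yields a unique absolutely continuous ${\bf x}:[0,\infty)\to[0,1]^{M+N}$ satisfying the equation almost everywhere.

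The delicate point is the discontinuity of $\Pi({\bf x},{\bf f}({\bf x}))$ at the faces $x_i=1$, which is why no classical ODE theorem applies. If a self-contained argument were wanted instead of the citation, I would handle it as follows.

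\emph{Uniqueness.} For two solutions ${\bf x},{\bf y}$, the velocity satisfies $\Pi({\bf x},{\bf f})={\bf f}({\bf x})-n_{\bf x}$ with $n_{\bf x}$ in the normal cone at ${\bf x}$. Monotonicity of the normal cone gives
\[
\td{}{t}\|{\bf x}-{\bf y}\|^2\le 2\langle {\bf x}-{\bf y},{\bf f}({\bf x})-{\bf f}({\bf y})\rangle\le 2L\|{\bf x}-{\bf y}\|^2
\]
for almost every $t$, and Gronwall's inequality finishes the argument.

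\emph{Existence.} I would construct projected Euler polygons and pass to a limit using equicontinuity and Arzel\`a--Ascoli. Here ${\bf f}$ is bounded on the compact cube, so there is no blow-up and the solution extends to all of $[0,\infty)$. The main obstacle in this route is identifying the limit as a solution at the faces; I would address it through the variational-inequality characterization of $\Pi$.
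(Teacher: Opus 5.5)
Your proposal is correct and takes the paper's route: the paper justifies the statement only by citing \cite[Theorem 2.5]{nagurney1996projected}, on the grounds that $\mathbf f$ is Lipschitz on $[0,1]^{M+N}$. Your additional checks fill in what the paper leaves implicit: the globally Lipschitz, bounded extension $\mathbf f\circ P$, the coordinatewise form of $\Pi$ on the box, and $f_i=g_i(0)>0$ (resp.\ $h_{i-M}(0)>0$) on the face $x_i=0$, which together justify both the theorem's hypotheses and the equivalence of \cref{GENERAL-eqProjectedEquation} with \cref{GENERAL-eqDynamicsUniform}.
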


\subsection{General properties of the equilibrium}

Here we provide some basic properties of the equilibrium.

\begin{proposition}
\label{GENERAL-propositionPositiveEquilibrium}
If $\left(  \hat{\mathbf{a}}\mathbf{,}%
\hat{\mathbf{b}}\right)  $ is an equilibrium of the system of \cref{mainEq2a,mainEq2b}, then%
\[
\forall i\in \{1,\ldots,M\}:\hat{a}_{i}>0\qquad \text{ and}\qquad \text{
}\forall j\in \{1,\ldots,N\}:\hat{b}_{j}>0\text{.}%
\]

\end{proposition}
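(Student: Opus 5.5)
We argue by contradiction, using only the sign conditions imposed on $g_i$ and $h_j$ and the definition of an equilibrium of \cref{mainEq2a,mainEq2b}. An equilibrium is a point $(\hat{\mathbf a},\hat{\mathbf b})\in[0,1]^{M+N}$ at which the right-hand sides of \cref{mainEq2a,mainEq2b} all vanish. Equivalently, since the paper writes the system as \cref{GENERAL-eqProjectedEquation}, it is a point with $\Pi(\hat{\bm x},\mathbf f(\hat{\bm x}))=0$.

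Suppose some coordinate vanishes, say $\hat a_i=0$ for some $i$; the case $\hat b_j=0$ is symmetric, with the roles of $g_i,c_i$ and $h_j,d_j$ exchanged. Since $0<1$, the first branch of \cref{mainEq2a} applies. The time derivative of $a_i$ at the equilibrium is therefore
\[
g_i\Bigl(K\cdot \hat a_i\cdot \sum_j\hat b_j\Bigr)=g_i(0).
\]
The argument of $g_i$ is exactly $0$ regardless of the values of the $\hat b_j$, because it is multiplied by $\hat a_i=0$. By assumption $g_i$ is positive on $[0,c_i)$ with $c_i>0$, so $g_i(0)>0$.

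This contradicts the equilibrium condition $\td{a_i}{t}=0$. In projected form, $\Pi(\hat{\bm x},\mathbf f(\hat{\bm x}))$ has $i$-th component $f_i(\hat{\bm x})=g_i(0)>0$. The projection onto $[0,1]^{M+N}$ does not alter a coordinate lying strictly inside $[0,1)$, and at the lower face $x_i=0$ it only removes components pointing outward, which is negative, whereas this component points inward. Hence $\hat a_i>0$ for all $i$, and likewise $\hat b_j>0$ for all $j$ since $h_j(0)>0$.

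The only point needing care, and it is minor, is confirming that the projection does not suppress a positive velocity at the face $x_i=0$. This is already noted in the Preliminaries, where it is observed that coordinates cannot fall below $0$. Everything else follows directly from the positivity of $g_i$ and $h_j$ at $0$.
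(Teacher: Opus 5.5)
Your proof is correct and is essentially the paper's: at $\hat a_i=0<1$ the first branch of \cref{mainEq2a} gives $\td{a_i}{t}=g_i(0)>0$, which contradicts equilibrium, and the same argument with $h_j(0)>0$ handles $\hat b_j$. Your remark about the projection at the face $x_i=0$ is harmless but not needed. The branch form already covers $a_i=0$, since that case falls under $a_i<1$.
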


\begin{proof}
Take any $i\in \{1,\ldots,M\}$. If $\hat{a}_{i}=1$ we are done (for this $i$).
If $\hat{a}_{i}<1$ then
\[
0=\frac{da_{i}}{dt}=g_i\left(K\hat a_i\sum_j\hat b_j\right).
\]
If we had $\hat{a}_{i}=0$ then we would also have $g_i\left(0\right)=0$, which contradicts the assumption that $g_i(x)>0$ whenever $x<c_i$ (and recall that $c_i>0$). We conclude that $\hat{a}_{i}>0$ for every $i$. The proof for $\hat{b}_{j}>0$ is similar.
\end{proof}

\begin{proposition}
\label{GENERAL-propositionEquilibriumProperties}\normalfont$\left(  \hat{\mathbf{a}}\mathbf{,}
\hat{\mathbf{b}}\right)  $ is an equilibrium of the system of \cref{mainEq2a,mainEq2b} if and only if all the following hold
\begin{align}
\sum_{j}\hat{b}_{j}  &  >0\qquad \text{ and}\qquad \text{ }\forall
i:\hat{a}_{i}=\min \left\{  \frac{c_{i}}{K\sum_{j}\hat{b}_{j}},1\right\}
\label{GENERAL-eq0015}\\
\sum_{i}\hat{a}_{i}  &  >0\qquad \text{ and}\qquad \text{ }\forall
j:\hat{b}_{j}=\min \left\{  \frac{d_{j}}{K\sum_{i}\hat{a}_{i}},1\right\}
\label{GENERAL-eq0016}%
\end{align}

\end{proposition}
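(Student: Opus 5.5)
Both directions can be read off directly from the definition of an equilibrium of \cref{mainEq2a,mainEq2b}: a point is an equilibrium exactly when every right-hand side vanishes there. The plan is to argue one coordinate at a time, using only the sign properties of $g_i$ and $h_j$ together with \cref{GENERAL-propositionPositiveEquilibrium}. I treat the $a_i$ coordinates; the $b_j$ coordinates are symmetric, with $h_j,d_j$ in place of $g_i,c_i$.

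\emph{Necessity.} Let $(\hat{\mathbf a},\hat{\mathbf b})$ be an equilibrium. By \cref{GENERAL-propositionPositiveEquilibrium}, every $\hat b_j>0$, so $S:=\sum_j\hat b_j>0$; this gives the first condition in \cref{GENERAL-eq0015}. Now fix $i$ and consider two cases.
\begin{itemize}
\item If $\hat a_i<1$, then $0=g_i(K\hat a_iS)$. Since $c_i$ is the unique zero of $g_i$, we get $K\hat a_iS=c_i$, so $\hat a_i=c_i/(KS)$. Because this value is $<1$, it equals $\min\{c_i/(KS),1\}$.
\item If $\hat a_i=1$, then $\min\{g_i(KS),0\}=0$, hence $g_i(KS)\ge 0$. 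The sign assumptions then force $KS\le c_i$, i.e.\ $c_i/(KS)\ge 1$, so again $\hat a_i=1=\min\{c_i/(KS),1\}$.
\end{itemize}

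\emph{Sufficiency.} Assume \cref{GENERAL-eq0015,GENERAL-eq0016} hold, and let $S=\sum_j\hat b_j>0$. Fix $i$.
\begin{itemize}
\item If $c_i/(KS)<1$, then $\hat a_i=c_i/(KS)<1$, so we are in the first branch of \cref{mainEq2a}. There the derivative is $g_i(K\hat a_iS)=g_i(c_i)=0$.
\item If $c_i/(KS)\ge1$, then $\hat a_i=1$ and $KS\le c_i$, so $g_i(KS)\ge0$. In the second branch the derivative is $\min\{g_i(KS),0\}=0$.
\end{itemize}
Hence every coordinate of the vector field vanishes, and the point is an equilibrium.

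The only mildly delicate points are these. Positivity of $\sum_j\hat b_j$ is needed for the minimum formula to make sense at all, and for necessity it is supplied by \cref{GENERAL-propositionPositiveEquilibrium}. The boundary case $\hat a_i=1$ requires translating $g_i(x)\ge0$ into $x\le c_i$, which uses that $g_i$ is negative on $(c_i,\infty)$. No other obstacle is expected.
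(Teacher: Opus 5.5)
Your proposal is correct and follows essentially the same route as the paper: positivity of $\sum_j\hat b_j$ comes from \cref{GENERAL-propositionPositiveEquilibrium}, followed by a coordinate-wise case split ($\hat a_i<1$ versus $\hat a_i=1$ for necessity, and which term attains the minimum for sufficiency) that uses only the sign properties of $g_i$ and $h_j$. Your sufficiency step is slightly more explicit than the paper's, since you spell out why $\hat a_i=1$ gives $KS\le c_i$ and hence $g_i(KS)\ge 0$.
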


\begin{proof}
The proof is divided into two parts.

\bigskip

\begin{itemize}
    \item \noindent {``$\Rightarrow$''}: Suppose $\left(  \hat{\mathbf{a}%
    },\hat{\mathbf{b}}\right)  $ is an equilibrium, i.e., $\td{{\bf a}}{t}=\td{{\bf b}}{t}=\mathbf{0}$ at $\left(  \hat{\mathbf{a}%
    },\hat{\mathbf{b}}\right)  $. We will show that (\ref{GENERAL-eq0015}%
    )-(\ref{GENERAL-eq0016}) hold. The conditions $\sum_{i}\hat{a}_{i}>0$, $\sum
    _{j}\hat{b}_{j}>0$ follow from Proposition \ref{GENERAL-propositionPositiveEquilibrium}. Furthermore,
    from $\td{a_{i}}{t}=0$, one of the following two must hold:%
    \[
    \left[  \hat{a}_{i}<1\text{ and }g_i\left(K\hat{a}_{i}\sum_{j}\hat
    {b}_{j}\right)=0\right]  \text{ or }\left[  \hat{a}_{i}=1\text{ and }%
    g_i\left(K\hat{a}_{i}\sum_{j}\hat
    {b}_{j}\right)\geq 0\right]
    \]
    which by the assumptions on $g_i$ imply that (recall from \cref{GENERAL-propositionPositiveEquilibrium} that $\sum_j\hat b_j>0$)
    \begin{equation}
    \hat{a}_{i}=\frac{c_{i}}{K\sum_{j}\hat{b}_{j}}<1\text{ or }%
    1=\hat{a}_{i}\leq \frac{c_{i}}{K\sum_{j}\hat{b}_{j}}.
    \end{equation}
    From these two we get%
    \begin{equation}
        \hat{a}_{i}=\min \left\{  \frac{c_{i}}{K\sum_{j}\hat{b}_{j}},1\right\}.    
    \end{equation}
    
    Similarly we get
    \begin{equation}
    \hat{b}_{j}=\min \left\{  \frac{d_{j}}{K\sum_{i}\hat{a}_{i}},1\right\}
    \end{equation}
    
    \item \noindent ``$\Leftarrow$'': Suppose \cref{GENERAL-eq0015,GENERAL-eq0016}
    hold. To show that $\left(  \hat{\mathbf{a}},\hat{\mathbf{b}}\right)
    $ is an equilibrium, take any $i$. We have
    \begin{equation}
    \hat{a}_{i}=\min \left\{  \frac{c_{i}}{K\sum_j\hat{b}_{j}},1\right\}.
    \end{equation}
    If $\hat{a}_{i}=\frac{c_{i}}{K\sum_{j}\hat{b}%
    _{j}}<1$ then we have
    \begin{equation}
    \td{a_i}{t}=g_i\left\{K\hat a_i\sum_{j}\hat{b}_{j}\right\}=0
    \end{equation}
    If $\hat{a}_{i}=1$ then we have
    \begin{equation}
    g_i\left(K\hat a_i\sum_{j}\hat{b}_{j}\right)\geq 0\Rightarrow \min \left\{  0,g_i\left(K\hat a_i\sum_{j}\hat{b}_{j}\right)\right\}  =0\Rightarrow \td{a_{i}}{t}=0.
    \end{equation}
    Hence $\td{a_{i}}{t}=0$ for all $i\in \left \{  1,\ldots ,M\right \}  $;
    similarly we prove that $\td{b_{j}}{t}=0$ for all $j\in \left \{
    1,\ldots ,N\right \}  $. It follows that $\left(  \hat{\mathbf{a}}%
    ,\hat{\mathbf{b}}\right)  $ is an equilibrium.
\end{itemize}
\end{proof}

In what follows, it will be useful to introduce the following functions:
\begin{equation}
    F,F_0,G:(0,\infty)\to \R,\hspace{2em} H:(0,\infty)\to \R,
\end{equation}
where
\begin{equation}
\label{GENERAL-auxFunctions}
\begin{aligned}
    G(x)=\sum_i\min\left(\frac{c_i}{Kx}, 1\right), & \hspace{1em} H(x)=\sum_j\min\left(\frac{d_j}{Kx}, 1\right),\\
    F(x)=G(H(x)), & \hspace{1em} F_0(x)=\frac{F(x)}{x}.
\end{aligned}
\end{equation}

\begin{proposition}
\label{GENERAL-lemmaEquilibriumCondition}
    If $x^*\in (0,M]$ satisfies $F_0(x^*)=1$, the point defined by
    \begin{equation}
    \label{GENERAL-eq13}
        \hat a_{i}=\min\,\left\{\frac{c_i}{K\sum_j\hat b_j},1\right\},\ \text{ and } \ \hat b_j=\min\,\left\{\frac{d_j}{Kx^*},1\right\},
    \end{equation}
    is an equilibrium of the system of \cref{mainEq2a,mainEq2b} and $x^*=\sum_i\hat a_i$. 
    
    Conversely, any equilibrium $(\hat a_{1},\ldots,\hat a_{M},\hat b_{1},\ldots ,\hat b_{N})$ of \cref{mainEq2a,mainEq2b} satisfies $F_0\left(\sum_i\hat a_i\right)=1$.
\end{proposition}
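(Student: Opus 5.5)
The argument is a direct translation between the fixed-point equation $F_0(x^*)=1$ and the equilibrium characterization of \cref{GENERAL-propositionEquilibriumProperties}. Both directions rest on one identity. If $\hat b_j=\min\{d_j/(Kx),1\}$, then $\sum_j\hat b_j=H(x)$. Similarly, if $\hat a_i=\min\{c_i/(Ky),1\}$ with $y=\sum_j\hat b_j$, then $\sum_i\hat a_i=G(y)$. Hence, with $\hat{\bf b}$ built from $x$ and $\hat{\bf a}$ built from $\sum_j\hat b_j$, we get $\sum_i\hat a_i=G(H(x))=F(x)$.

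\emph{Forward direction.} Let $x^*\in(0,M]$ with $F_0(x^*)=1$, i.e.\ $F(x^*)=x^*$, and define $(\hat{\bf a},\hat{\bf b})$ by \cref{GENERAL-eq13}.

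First, each $\hat b_j>0$, since $d_j>0$ and $x^*>0$. Thus $\sum_j\hat b_j=H(x^*)>0$, and $\hat a_i$ is well defined.

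By the identity, $\sum_i\hat a_i=F(x^*)=x^*$. This proves the claim $x^*=\sum_i\hat a_i$, and in particular $\sum_i\hat a_i>0$.

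It remains to check the conditions of \cref{GENERAL-propositionEquilibriumProperties}. \Cref{GENERAL-eq0015} holds by the definition of $\hat a_i$. For \cref{GENERAL-eq0016}, substitute $x^*=\sum_i\hat a_i$ into the definition of $\hat b_j$, which gives $\hat b_j=\min\{d_j/(K\sum_i\hat a_i),1\}$. So $(\hat{\bf a},\hat{\bf b})$ is an equilibrium.

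\emph{Converse.} Take any equilibrium and set $x=\sum_i\hat a_i$. By \cref{GENERAL-propositionEquilibriumProperties}, $x>0$ and $\hat b_j=\min\{d_j/(Kx),1\}$, so $\sum_j\hat b_j=H(x)$. The same proposition gives $\hat a_i=\min\{c_i/(KH(x)),1\}$, and summing over $i$ yields $x=G(H(x))=F(x)$. Hence $F_0(x)=1$, and also $x\le M$ because each $\hat a_i\le 1$.

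There is no real obstacle here; the proof is essentially bookkeeping. The only points needing care are positivity, so that $F$, $G$, $H$ are evaluated inside their domain $(0,\infty)$ and no division by zero occurs, and keeping the substitution $x^*=\sum_i\hat a_i$ in the right order so the argument is not circular.
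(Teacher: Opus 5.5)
Your proposal is correct and follows the paper's proof essentially step for step. Both directions rest on the identity $\sum_i\hat a_i=G(H(x))=F(x)$ combined with the characterization in \cref{GENERAL-propositionEquilibriumProperties}, and your explicit positivity checks (and the remark that $x\le M$) only spell out what the paper leaves implicit.
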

\begin{proof}
    First note that $F_0(x^*)=1$ implies $F(x^*)=x^*$. Also, from \cref{GENERAL-eq13,GENERAL-auxFunctions} we have that $\sum_j\hat b_j=H(x^*)$ and
    \begin{equation}
        \sum_i\hat a_i=G\left(\sum_j\hat b_j\right)=G(H(x^*))=F(x^*)=x^*.
    \end{equation}
    Substituting $\sum_i\hat a_i$ for $x^*$ into \cref{GENERAL-eq13}, we obtain 
    \begin{equation}
    \label{GENERAL-eq14}
        \hat a_{i}=\min\,\left\{\frac{c_i}{K\sum_j\hat b_j},1\right\}>0\ \text{ and } \ \hat b_j=\min\,\left\{\frac{d_j}{K\sum_i\hat a_i},1\right\}>0.
    \end{equation}
    By \cref{GENERAL-propositionEquilibriumProperties}, $(\hat a_1,\ldots ,\hat a_M,\hat b_1,\ldots \hat b_N)$ is an equilibrium.

    For the converse, by \cref{GENERAL-propositionEquilibriumProperties} we have that $\sum _j\hat b_j=H\left(\sum _i\hat a_i\right)$ and
    \begin{equation}
        \sum_i\hat a_i=G\left(\sum _j\hat b_j\right)=G\left(H\left(\sum_i\hat a_i\right)\right)=F\left(\sum _i\hat a_i\right).
    \end{equation}
    Therefore, $F_0\left(\sum_i\hat a_i\right)=\frac{F\left(\sum_i\hat a_i\right)}{\sum_i\hat a_i}=\frac{\sum_i\hat a_i}{\sum_i\hat a_i}=1$.
\end{proof}

\subsection{Existence and uniqueness of equilibrium}

\begin{lemma}
\label{GENERAL-differentiabilityMin}
    Let $f(x)=\min\,\left\{g(x),1\right\}$, where $g:I\to \R $ is non-decreasing and $I\subset \R$ is an interval. Then, $f$ is differentiable wherever $g$ is, except perhaps at a single point.
\end{lemma}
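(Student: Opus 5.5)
The plan is to use the monotonicity of $g$ to split the interval $I$ into the region where $g<1$ and the region where $g\geq 1$, and to show these regions meet in at most one point. Set
\begin{equation*}
S=\{x\in I: g(x)\geq 1\}.
\end{equation*}
Because $g$ is non-decreasing, $S$ is an up-set of $I$: if $x\in S$ and $y\geq x$ with $y\in I$, then $g(y)\geq g(x)\geq 1$. Hence $S$ is either empty, all of $I$, or a subinterval of $I$ sharing its right end. Let $x_0=\inf S$, with the convention that $x_0$ is undefined when $S$ is empty. For $x\in I$ with $x<x_0$ we have $g(x)<1$, and for $x>x_0$ we have $x\in S$, so $g(x)\geq 1$.

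Next, take any point $x\neq x_0$ in the interior of $I$ at which $g$ is differentiable; one-sided endpoints of $I$ are handled the same way. There are two cases.
\begin{itemize}
\item If $x<x_0$, then on the open neighborhood $I\cap(-\infty,x_0)$ of $x$ we have $f=g$. Therefore $f$ is differentiable at $x$ with $f'(x)=g'(x)$.
\item If $x>x_0$, then on the open neighborhood $I\cap(x_0,\infty)$ of $x$ we have $f\equiv 1$. Therefore $f$ is differentiable at $x$ with $f'(x)=0$.
\end{itemize}
If $S$ is empty, then $f=g$ on all of $I$ and the claim holds with no exceptional point at all. So in every case the only possible failure of differentiability occurs at the single point $x_0$.

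There is no real obstacle here. The one point that needs care is the definition of the exceptional point. At points where $g(x)=1$, either $g$ is constant equal to $1$ near $x$, in which case $x$ lies strictly to the right of $x_0$ and $f\equiv 1$ nearby, or $x=x_0$. This is handled automatically by defining $x_0$ as the infimum of $S$, rather than as a solution of $g(x)=1$, which need not be unique.
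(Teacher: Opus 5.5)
Your proof is correct and is the same argument as the paper's. The paper also takes $u=\inf\{x\in I: g(x)\geq 1\}$ and observes that $f=g$ on $I\cap(-\infty,u)$ and that $f$ is constant on $I\cap(u,\infty)$; you additionally spell out the monotonicity step giving $g\geq 1$ to the right of the infimum and the degenerate case $S=\emptyset$, which the paper leaves implicit.
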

\begin{proof}
    Let $u=\inf \{x\in I:g(x)\geq 1\}$. Then on $I\cap (-\infty ,u)$, $f=g$, and on $I\cap (u,\infty)$, $f$ is constant. Therefore, $f$ is differentiable wherever $g$ is, except perhaps at $u$.
\end{proof}

\begin{lemma}
\label{GENERAL-lemma_g0}
    \begin{enumerate}
        \item $F_0(x)$ is non-increasing on $(0,M]$.
        \item If $\sum_ic_i\neq \sum_jd_j$, then $F_0(x)=1$ has a unique solution on $(0,M]$.
    \end{enumerate}
\end{lemma}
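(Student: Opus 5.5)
The plan is to rewrite $F_0$ as a ratio of two monotone, piecewise-linear functions. Introduce $\varphi(y)=y\,G(y)=\sum_i\min\{c_i/K,\,y\}$ and $\psi(x)=x\,H(x)=\sum_j\min\{d_j/K,\,x\}$. Both are continuous, non-decreasing and strictly positive on $(0,\infty)$. Since $H(x)>0$, we can write
\begin{equation*}
F_0(x)=\frac{G(H(x))}{x}=\frac{H(x)\,G(H(x))}{x\,H(x)}=\frac{\varphi(H(x))}{\psi(x)}.
\end{equation*}

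\textbf{Part 1 (monotonicity).} Each summand $\min\{d_j/(Kx),1\}$ is non-increasing in $x$, so $H$ is non-increasing. Because $\varphi$ is non-decreasing, the numerator $\varphi(H(x))$ is therefore non-increasing in $x$. The denominator $\psi(x)$ is positive and non-decreasing. A non-negative non-increasing function divided by a positive non-decreasing one is non-increasing, which proves the first claim.

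\textbf{Part 2, existence.} First, $F$ is continuous on $(0,\infty)$, being a composition of continuous functions. For $x<\min_j d_j/K$ every summand of $H$ saturates, so $H(x)=N$. Hence $F_0(x)=G(N)/x$, and since $G(N)>0$ this tends to $+\infty$ as $x\to 0^+$. At the other end, $G$ is a sum of $M$ terms each at most $1$, so $F(M)\le M$ and $F_0(M)\le 1$. The intermediate value theorem then gives some $x^*\in(0,M]$ with $F_0(x^*)=1$.

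\textbf{Part 2, uniqueness (the main step).} Suppose $F_0(x_1)=F_0(x_2)=1$ with $x_1<x_2$. By Part 1, $F_0\equiv 1$ on $[x_1,x_2]$. On this interval the numerator $\varphi(H(x))$ is non-increasing and the denominator $\psi(x)$ is non-decreasing, while their ratio is constant. This forces both to be constant: if $\psi(x_2)>\psi(x_1)$, then the ratio at $x_2$ would be at most $\varphi(H(x_1))/\psi(x_2)<1$, a contradiction. The same argument applies to the numerator.

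Next, $\psi$ constant on an interval of positive length means no term $\min\{d_j/K,x\}$ can still be increasing there. So $x\ge d_j/K$ for all $j$, and every term of $H$ is unsaturated: $H(x)=\sum_j d_j/(Kx)$. Likewise, $\varphi$ is constant on the nondegenerate interval $H([x_1,x_2])$; it is nondegenerate because $H$ is now strictly decreasing there. Hence $H(x)\ge c_i/K$ for all $i$, so $G(H(x))=\sum_i c_i/(K H(x))$. Substituting gives $F_0(x)=\sum_i c_i/\sum_j d_j$. This equals $1$ only if $\sum_i c_i=\sum_j d_j$, contradicting the hypothesis.

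The delicate point is exactly this conversion of ``ratio constant'' into ``both factors constant,'' and then into ``all minima unsaturated.'' I would check it carefully, paying particular attention to the nondegeneracy of the image interval $H([x_1,x_2])$.
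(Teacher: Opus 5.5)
Your proof is correct, and it takes a genuinely different route from the paper's.

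\textbf{The paper's route.} It rewrites $F_0(x)=\sum_i\min\{c_ih(x),1/x\}$ with $h(x)=1/\sum_j\min\{d_j,Kx\}$. It invokes the piecewise-differentiability lemma for $\min\{g,1\}$ and computes $F_0'$ explicitly in terms of the index sets $A=\{j:d_j>Kx\}$ and $B=\{i:c_ih(x)<1/x\}$. Monotonicity follows from $F_0'\le 0$ off a finite set. For uniqueness, it notes that $F_0'=0$ on a flat stretch forces $A=B^c=\emptyset$, and hence $F_0=\sum_ic_i/\sum_jd_j$.

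\textbf{Your route.} The factorization $F_0(x)=\varphi(H(x))/\psi(x)$, with $\varphi,\psi$ non-decreasing and piecewise linear, replaces all of that calculus.
\begin{itemize}
\item Monotonicity becomes the one-line fact that a non-negative non-increasing function divided by a positive non-decreasing one is non-increasing.
\item On a flat stretch, the identity $\varphi(H(x))=\psi(x)$ forces both factors to be constant. You then correctly convert this into the same ``no saturation on either side'' conclusion that the paper extracts from $A=B^c=\emptyset$. The two steps that make this work are that a constant sum of monotone summands has constant summands, and that $H([x_1,x_2])$ is nondegenerate once $H(x)=\sum_jd_j/(Kx)$.
\end{itemize}

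\textbf{What each buys.} Your argument is more elementary. It needs neither the differentiability lemma nor the step from a sign condition on $F_0'$ (valid off finitely many points) to monotonicity. It also gives a transparent reading of the equilibrium condition: $F_0(x)=1$ is equivalent to $K\varphi(H(x))=K\psi(x)$, i.e., the total matching rate counted from the male side equals the total counted from the female side. The paper's derivative formula, in exchange, quantifies how fast $F_0$ decreases and shows explicitly which saturation patterns can produce flatness. Nothing later in the paper uses that extra information: the subsequent lemmas only need monotonicity and uniqueness of the root.
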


\begin{proof}
    \begin{enumerate}
        \item By \cref{GENERAL-differentiabilityMin} $F_0$ is differentiable except perhaps at a finite number of points. We write
        \begin{equation}
        \label{GENERAL-eq7}
            F_0(x)=\sum_i\min\,\left\{c_ih(x),\frac{1}{x}\right\},
        \end{equation}
        where
        \begin{equation}
        \label{GENERAL-eq5}
            h(x)=\frac{1}{\sum_j\min\, \left\{d_j,Kx\right\}}.
        \end{equation}
        Let $x\in(0,M]$ be such that $F'_0(x)$ exists and let $A=\{j:d_j>Kx\}$. We have
        \begin{equation}
        \label{GENERAL-eqdummy11}
            \sum_j\min\, \left\{d_j,Kx\right\}=K\sumdu{j\in A}{}x+\sumdu{j\notin A}{}d_j.
        \end{equation}
        Therefore,
        \begin{equation}
            h'(x)=-\frac{K\cdot |A|}{\left(\sum_j\min\, \left\{d_j,Kx\right\}\right)^2},
        \end{equation}
        where $|\cdot |$ denotes cardinality. Now let $B=\{i:c_ih(x)<\frac{1}{x}\}$ and note that
        \begin{equation}
        \label{GENERAL-eqdummy12}
        \begin{aligned}
            F'_0(x) & =\sum_ic_ih'(x)-\sumdu{i\notin B}{}\frac{1}{x^2}\\
            & =-\sumdu{i\in B}{}Kc_i\cdot |A|\cdot \left(\sum_{j}\min\, \left\{d_j,Kx\right\}\right)^{-2}-\frac{|B^c|}{x^2}.
        \end{aligned}
        \end{equation}
        We conclude that $F'_0(x)\leq 0$. Since $F_0$ is continuous and $F'_0(x)\leq 0$ for all but finitely many points (on which $F'_0(x)$ doesn't exist), $F_0$ is non-increasing on $(0,M]$.
        \item We first show uniqueness. Given that $F_0$ is non-increasing, if $F_0(x_1)=F_0(x_2)=1$, then $F_0(x)=1$ for all $x\in [x_1,x_2]$, hence also $F'_0(x)=0$ on the interior $(x_1,x_2)$. Let $x\in (x_1,x_2)$ and use the definitions of $A$ and $B$ from above. By \cref{GENERAL-eqdummy12}, because both of the terms on the right-hand side are non-positive, $F_0'(x)=0$ implies $A=B^c=\emptyset$. By \cref{GENERAL-eqdummy11}, $A=\emptyset $ implies that $h(x)=\left(\sum_jd_j\right)^{-1}$. Also, by \cref{GENERAL-eq7}, $B^c=\emptyset$ implies that
        \begin{equation}
            F_0(x)=\sum_ic_ih(x)=\frac{\sum_ic_i}{\sum_jd_j},
        \end{equation}
        which is $\neq 1$ by assumption. We conclude that $F_0(x)$ cannot have more than one solution. To show that it has at least one, note that for sufficiently small $x$, we have $h(x)=\frac{1}{KNx}$, hence $\limd{x\to 0^+}F_0(x)=\infty $. Moreover, $F_0(M)=\sum_i\min\,\left\{c_ih(M),\frac{1}{M}\right\}\leq 1$. By the intermediate value theorem, $F_0$ takes the value $1$ at least once on $(0,M]$.
    \end{enumerate}
    
\end{proof}

\begin{corollary}
\label{GENERAL-corollaryUniqueEquilibrium}
    If $\sum_ic_i\neq \sum_jd_j$, then the system of \cref{mainEq2a,mainEq2b} has a unique equilibrium $({\bf \hat a},{\bf \hat b})$.
\end{corollary}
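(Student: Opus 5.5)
The corollary follows by combining \cref{GENERAL-lemmaEquilibriumCondition} with part 2 of \cref{GENERAL-lemma_g0}. We prove existence and uniqueness separately.

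\emph{Existence.} By \cref{GENERAL-lemma_g0}, since $\sum_ic_i\neq\sum_jd_j$ there is an $x^*\in(0,M]$ with $F_0(x^*)=1$. The first half of \cref{GENERAL-lemmaEquilibriumCondition} then says that the point defined by \cref{GENERAL-eq13} is an equilibrium of \cref{mainEq2a,mainEq2b}.

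\emph{Uniqueness.} Let $(\hat{\bf a},\hat{\bf b})$ and $(\tilde{\bf a},\tilde{\bf b})$ be two equilibria. By \cref{GENERAL-propositionPositiveEquilibrium}, every coordinate is positive, and each $\hat a_i\le 1$. Hence $\hat x:=\sum_i\hat a_i\in(0,M]$, and likewise $\tilde x:=\sum_i\tilde a_i\in(0,M]$. The converse part of \cref{GENERAL-lemmaEquilibriumCondition} gives $F_0(\hat x)=F_0(\tilde x)=1$. Uniqueness of the solution in \cref{GENERAL-lemma_g0} then forces $\hat x=\tilde x$.

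It remains to show that equal sums force equal equilibria. By \cref{GENERAL-propositionEquilibriumProperties}, $\hat b_j=\min\{d_j/(K\hat x),1\}$ and $\tilde b_j=\min\{d_j/(K\tilde x),1\}$. Since $\hat x=\tilde x$, we get $\hat b_j=\tilde b_j$ for all $j$, and so $\sum_j\hat b_j=\sum_j\tilde b_j$. Applying the same proposition again, $\hat a_i=\min\{c_i/(K\sum_j\hat b_j),1\}=\tilde a_i$ for all $i$. Therefore the two equilibria coincide.

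There is no real obstacle here; all the substantive work is in \cref{GENERAL-lemma_g0}. The only points to check are that any equilibrium has $\sum_i\hat a_i$ in the domain $(0,M]$ where that lemma gives uniqueness, which \cref{GENERAL-propositionPositiveEquilibrium} and $\hat a_i\le 1$ provide, and that an equilibrium is determined by its $a$-sum, which \cref{GENERAL-propositionEquilibriumProperties} provides.
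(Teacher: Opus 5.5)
Your proof is correct and follows the paper's own argument, which simply combines \cref{GENERAL-lemmaEquilibriumCondition} with the second part of \cref{GENERAL-lemma_g0}. You have additionally made explicit two routine checks the paper leaves implicit: that $\sum_i\hat a_i\in(0,M]$ for any equilibrium, and that an equilibrium is determined by $\sum_i\hat a_i$ via \cref{GENERAL-propositionEquilibriumProperties}.
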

\begin{proof}
    Combine \cref{GENERAL-lemmaEquilibriumCondition} with the second part of \cref{GENERAL-lemma_g0}.
\end{proof}

\subsection{Global attractiveness}
\label{GENERAL-sectionStability}
We will now show global attractiveness of the equilibrium. We assume for the rest of our results that $\sum_ic_i\neq \sum_jd_j$, so that the uniqueness of the equilibrium is guaranteed.

\begin{assumption}[Standing assumption throughout \cref{GENERAL-sectionStability,GENERAL-sectionEquilibriumValue}]
    $\sum_ic_i\neq \sum_jd_j$
\end{assumption}

Let $A(t)=\sum_ia_i(t)$, $B(t)=\sum_jb_j(t)$, and define similarly $\hat A$, $\hat B$.
Define $A^+=\limsup A(t)$, $A^-=\liminf A(t)$, and similarly for $B^+$, $B^-$. Also recall the definitions
\begin{equation}
    G(x)=\sum_i\min\left(\frac{c_i}{Kx}, 1\right),\hspace{2em} H(x)=\sum_j\min\left(\frac{d_j}{Kx}, 1\right),\hspace{2em} F=G\circ H.
\end{equation}

\begin{lemma}
\label{GENERAL-lemmaLiminfInequalities}
    $B^-\geq H(A^+)$, $B^+\leq H(A^-)$, $A^-\geq G(B^+)$, and $A^+\leq G(B^-)$
\end{lemma}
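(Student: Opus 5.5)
We prove the four inequalities by a comparison argument applied to each coordinate separately, using only the sign structure of $g_i$ and $h_j$. By symmetry between the two groups it suffices to prove $B^-\geq H(A^+)$ and $B^+\leq H(A^-)$; the statements for $A$ follow by exchanging the roles of $(a,g,c)$ and $(b,h,d)$. The plan is to prove the coordinatewise bounds
\begin{equation*}
\liminf_{t\to\infty} b_j(t)\geq \min\left\{\frac{d_j}{KA^+},1\right\},
\qquad
\limsup_{t\to\infty} b_j(t)\leq \min\left\{\frac{d_j}{KA^-},1\right\}.
\end{equation*}
Summing over $j$ and using $\liminf\sum\geq\sum\liminf$ and $\limsup\sum\leq\sum\limsup$ then gives exactly $B^-\geq H(A^+)$ and $B^+\leq H(A^-)$.

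For the lower bound, fix $\varepsilon>0$. By the definition of $\limsup$ there is $T$ with $A(t)\leq A^++\varepsilon$ for all $t\geq T$. Set $\beta_\varepsilon=\min\{d_j/(K(A^++\varepsilon)),1\}$ and fix any $\beta<\beta_\varepsilon$. Whenever $t\geq T$ and $b_j(t)\leq\beta<1$, the interior branch of \cref{mainEq2b} applies. In that case $Kb_j(t)A(t)\leq K\beta(A^++\varepsilon)<d_j$, and the Lipschitz function $h_j$ is positive on $[0,d_j)$. Restricted to the compact interval $[0,K\beta(A^++\varepsilon)]$ it is therefore bounded below by some $\eta>0$, so $\td{b_j}{t}\geq\eta$ wherever the derivative exists in this region. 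Since Carath\'eodory solutions are absolutely continuous, two conclusions follow. First, $b_j$ must exceed $\beta$ after time at most $T+\beta/\eta$. Second, once it exceeds $\beta$ it can never return below $\beta$, because near the level $\beta$ it is strictly increasing. Hence $\liminf b_j\geq\beta$. Letting $\beta\uparrow\beta_\varepsilon$ and $\varepsilon\to0$ gives the claimed lower bound. (If $A^+=0$ the bound reads $\liminf b_j\geq 1$, and the same argument with $A(t)\leq\varepsilon$ works.)

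The upper bound is symmetric. Choose $T$ with $A(t)\geq A^--\varepsilon$ for $t\geq T$, and consider any level $\beta>d_j/(K(A^--\varepsilon))$ with $\beta<1$. On the region $b_j\geq\beta$ we have $Kb_jA\geq K\beta(A^--\varepsilon)>d_j$, so $h_j$ is negative there. This holds on both branches, since when $b_j=1$ the $\min$ with $0$ keeps the derivative negative. Here $b_j\leq1$, so the argument of $h_j$ ranges over a compact set on which $h_j\leq-\eta<0$. The same trapping argument gives $\limsup b_j\leq\beta$. If no such $\beta<1$ exists, the bound $\limsup b_j\leq 1$ is trivial. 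The case $A^-=0$ also gives only the trivial bound $1$, consistent with $H(0^+)=N$ under the convention $\min\{d_j/(K\cdot 0),1\}=1$.

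The main obstacle is making the trapping argument rigorous for Carath\'eodory solutions of a projected system whose vector field can be discontinuous at $b_j=1$. Two points need care. The first is that $h_j$ has to be bounded away from $0$ uniformly on a compact range, which is exactly why we work at a level $\beta$ strictly inside the threshold. The second is that sign information about $\td{b_j}{t}$ holding almost everywhere on a region transfers to monotonicity of $b_j$ there; this uses absolute continuity of the solution. I would handle both by integrating $\td{b_j}{t}$ over any maximal time interval on which $b_j$ stays on the wrong side of $\beta$, and deriving a contradiction with that interval being unbounded or with the solution re-crossing the level.
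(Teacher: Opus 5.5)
Your proposal is correct and follows essentially the same route as the paper: a coordinatewise trapping argument at a level strictly below $\min\{d_j/(K(A^++\epsilon)),1\}$, a uniform positive lower bound on $h_j$ obtained from continuity on a compact set, then $\epsilon\to 0$ and summation over $j$. The paper proves only $B^-\geq H(A^+)$ and calls the rest similar, whereas you make the absolute-continuity step explicit and also treat the boundary branch $b_j=1$ in the upper bound.
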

\begin{proof}
    We will only show that $B^-\geq H(A^+)$, the rest of the relations are proved similarly. Let $\epsilon >0$. If $b_j(t)<\frac{d_j}{K(A^++\epsilon)}\Leftrightarrow K(A^++\epsilon)b_j(t)<d_j$, then we can find $t_0\geq 0$ such that for all $t>t_0$ we also have $KA(t)b_j(t)<d_j$. Therefore, from the assumptions on $h_j$ we get that for $t>t_0$,
    \begin{equation}
    \begin{aligned}
         b_j(t)<\frac{d_j}{K(A^++\epsilon)}\ \Rightarrow\ h_j\left(Kb_j(t)\cdot A(t)\right)>0.
    \end{aligned}
    \end{equation}
    By continuity of $h_j$, we can even find some $\delta>0$, such that if $t>t_0$ and $b_j(t)$ lies on the compact set $\left[0,\frac{d_j}{K(A^++\epsilon)}-\epsilon\right]$ (for sufficiently small $\epsilon>0$), we have
    \begin{equation}
        h_j\left(Kb_j(t)\cdot A(t)\right)>\delta .
    \end{equation}
    That is, for $t>t_0$,
    \begin{equation}
    \label{eq61}
        b_j(t)\in \left[0,\frac{d_j}{K(A^++\epsilon)}-\epsilon\right]\Rightarrow h_j\left(Kb_j(t)\cdot A(t)\right)>\delta.
    \end{equation}
    Recall that if $b_j(t)<1$, then $\td{b_j(t)}{t}=h_j\left(Kb_j(t)\cdot A(t)\right)$, therefore \cref{eq61} gives
    \begin{equation}
        b_j(t)<\min\,\left\{\frac{d_j}{K(A^++\epsilon)}-\epsilon ,1\right\}\Rightarrow \td{b_j(t)}{t}>\delta>0
    \end{equation}
    for all $t>t_0$. In other words, for sufficiently large $t$, the vector field points to the right and with magnitude at least $\delta$ whenever $b_j(t)<\min\,\left\{\frac{d_j}{K(A^++\epsilon)}-\epsilon ,1\right\}$. We conclude that $\liminf b_j(t)\geq \min\,\left\{\frac{d_j}{K(A^++\epsilon)}-\epsilon ,1\right\}$. Taking $\epsilon\to 0$ we get $\liminf b_j(t)\geq \min\,\left\{\frac{d_j}{KA^+},1\right\}$ and summing over $j$, we get $B^-\geq H(A^+)$.
\end{proof}

\begin{lemma}
\label{GENERAL-lemmaConvergenceA}
    For any initial condition, $\limd{t\to\infty }A(t)=\hat A$.
\end{lemma}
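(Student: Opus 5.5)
The plan is to combine the four inequalities of \cref{GENERAL-lemmaLiminfInequalities} into two one-sided fixed-point inequalities for $F=G\circ H$. These say that $A^+$ and $A^-$ lie on opposite sides of the level set $F_0=1$. The monotonicity and uniqueness in \cref{GENERAL-lemma_g0} should then squeeze $A^+$ and $A^-$ together at $\hat A$.

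\textbf{Step 1: positivity and range.} First I check that all quantities lie in the domains where $G$, $H$ and $F_0$ are defined. Trivially $0\le A^-\le A^+\le M$ and $0\le B^-\le B^+\le N$. By \cref{GENERAL-lemmaLiminfInequalities}, $A^-\ge G(B^+)\ge G(N)>0$, because each term $\min\{c_i/(KN),1\}$ is positive. Similarly $B^-\ge H(A^+)\ge H(M)>0$. So $A^\pm\in(0,M]$ and $B^\pm\in(0,N]$, and expressions such as $G(B^-)$ and $H(A^+)$ are well defined.

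\textbf{Step 2: fixed-point inequalities.} $G$ and $H$ are non-increasing, since each summand $\min\{c/(Kx),1\}$ is non-increasing in $x$. From $B^-\ge H(A^+)$ and $A^+\le G(B^-)$ we get
\begin{equation*}
A^+\le G(B^-)\le G(H(A^+))=F(A^+),
\end{equation*}
that is, $F_0(A^+)\ge 1$. Symmetrically, $B^+\le H(A^-)$ and $A^-\ge G(B^+)$ give
\begin{equation*}
A^-\ge G(B^+)\ge G(H(A^-))=F(A^-),
\end{equation*}
that is, $F_0(A^-)\le 1$.

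\textbf{Step 3: squeezing.} By \cref{GENERAL-lemmaEquilibriumCondition} and \cref{GENERAL-corollaryUniqueEquilibrium}, $\hat A=\sum_i\hat a_i\in(0,M]$ is the unique solution of $F_0(x)=1$ on $(0,M]$. This uses the standing assumption $\sum_ic_i\neq\sum_jd_j$ and part 2 of \cref{GENERAL-lemma_g0}.

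Suppose $A^+>\hat A$. Since $F_0$ is non-increasing (part 1 of \cref{GENERAL-lemma_g0}), $F_0(A^+)\le F_0(\hat A)=1$. Together with Step 2 this forces $F_0(A^+)=1$, contradicting uniqueness. Hence $A^+\le\hat A$.

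Likewise, if $A^-<\hat A$ then $F_0(A^-)\ge 1$, so $F_0(A^-)=1$, again a contradiction. Hence $A^-\ge\hat A$.

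Therefore $\hat A\le A^-\le A^+\le\hat A$, so $\lim_{t\to\infty}A(t)=\hat A$.

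\textbf{Main obstacle.} The only delicate point is the one handled in Step 1: the limsup and liminf must be strictly positive so that $F_0$ and the compositions are evaluated inside their domains. This follows from the lemma's own inequalities applied at the trivial bounds $A^+\le M$ and $B^+\le N$. Once that is in place, the rest is a direct monotonicity argument, and uniqueness of the root of $F_0=1$ does the real work.
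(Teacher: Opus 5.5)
Your proposal is correct and follows the paper's proof essentially step for step. Like the paper, you combine the four $\liminf$/$\limsup$ inequalities with the monotonicity of $G$ to get $F_0(A^-)\le 1\le F_0(A^+)$, and then squeeze using that $F_0$ is non-increasing with unique root $\hat A$; you are slightly more careful than the paper, which cites only the monotonicity of $F_0$ when concluding $A^-\ge\hat A$ and leaves implicit both the uniqueness of the root and the positivity of $A^+$ and $B^\pm$.
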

\begin{proof}
    From \cref{GENERAL-lemmaLiminfInequalities} we have $A^-\geq G(B^+)>0$, and because $G$ is non-increasing, $B^+\leq H(A^-)$ implies $G(B^+)\geq G(H(A^-))$. Therefore,
    \begin{equation}
        A^-\geq G(B^+)\geq G(H(A^-))=F(A^-),
    \end{equation}
    or equivalently $F_0(A^-)=\frac{F(A^-)}{A^-}\leq 1$. By \cref{GENERAL-lemmaEquilibriumCondition,GENERAL-lemma_g0}, $F_0(x)$ is non-increasing on $(0,M]$ and  $F_0(\hat A)=1$. Therefore, $A^-\geq \hat A$. We similarly get $A^+\leq \hat A$ and since by definition $A^-\leq A^+$, we conclude that $A^-=\hat A=A^+=\limd{t\to\infty }A(t)$.
\end{proof}

The last lemma shows that the sum of the acceptance probabilities, $A(t)=\sum_ia_i(t)$, converges. To get that the individual components also converge we need one more lemma.

\begin{proposition}
\label{GENERAL-propositionStability}
    The unique equilibrium of \cref{mainEq2a,mainEq2b} is globally attractive.
\end{proposition}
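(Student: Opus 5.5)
Since $A(t)\to\hat A$ by \cref{GENERAL-lemmaConvergenceA}, the plan is to obtain convergence of the $b_j$'s first, then of $B(t)$, and finally of the $a_i$'s, using \cref{GENERAL-lemmaLiminfInequalities}-type arguments at the level of individual components.

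\textbf{Step 1: the $b_j$'s.} The proof of \cref{GENERAL-lemmaLiminfInequalities} already gives, before summing over $j$,
\[
\liminf_{t\to\infty} b_j(t)\ \geq\ \min\left\{\frac{d_j}{KA^+},1\right\}.
\]
The symmetric argument bounds the upper limit: whenever $b_j(t)>\frac{d_j}{K(A^--\epsilon)}+\epsilon$, the field $h_j$ is eventually at most $-\delta<0$, and at $b_j=1$ the projected field is also $\min\{h_j,0\}\leq -\delta$. This yields
\[
\limsup_{t\to\infty} b_j(t)\ \leq\ \min\left\{\frac{d_j}{KA^-},1\right\}.
\]
Since $A^+=A^-=\hat A>0$ (positivity from \cref{GENERAL-propositionPositiveEquilibrium}), we get
\[
b_j(t)\to\min\left\{\frac{d_j}{K\hat A},1\right\}=\hat b_j,
\]
the last equality by \cref{GENERAL-propositionEquilibriumProperties} applied to the unique equilibrium.

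\textbf{Step 2: $B(t)$ and the $a_i$'s.} Summing Step 1 over $j$ gives $B(t)\to\hat B=H(\hat A)>0$. Repeating the same one-component argument for $a_i$, with $B^\pm=\hat B$, gives
\[
a_i(t)\to\min\left\{\frac{c_i}{K\hat B},1\right\}=\hat a_i.
\]
Hence every trajectory converges to $({\bf\hat a},{\bf\hat b})$, which is unique by \cref{GENERAL-corollaryUniqueEquilibrium}.

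\textbf{Main obstacle.} The delicate point is the upper-limit counterpart of \cref{GENERAL-lemmaLiminfInequalities}. We must check that the ``uniformly pushed back by $\delta$'' argument works with Carath\'eodory solutions, which are only absolutely continuous, and with the projected field at the boundary $b_j=1$. To handle this, integrate $\td{b_j}{t}\leq-\delta$ over any interval spent above the threshold. The interval must end in finite time, and afterwards the trajectory cannot recross the threshold, since the derivative is negative there. Taking $\epsilon\to0$ then finishes the argument.
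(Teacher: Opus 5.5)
Your proposal is correct and takes essentially the same approach as the paper: you bound $\liminf b_j$ and $\limsup b_j$ componentwise using the uniform push of $h_j$ away from its zero, let $\epsilon\to 0$ to get $b_j(t)\to\min\{d_j/(K\hat A),1\}=\hat b_j$, and then sum over $j$ and repeat the argument for the $a_i$. The differences are cosmetic: the paper writes its thresholds directly in terms of $\hat A\pm\epsilon$ (using $A(t)\to\hat A$) rather than via $A^\pm$, and your step of integrating $\td{b_j}{t}\le-\delta$ for Carath\'eodory solutions spells out something the paper leaves implicit.
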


\begin{proof}
    Let $\epsilon\in (0,\hat A)$. By continuity of $h_j$ and the fact that $h_j(x)<0$ whenever $x>d_j$, we can find some $\delta >0$ such that
    \begin{equation}
    \label{eq64}
        x\in [d_j+\epsilon ,K(M+\epsilon)]\Rightarrow h_j(x)<-\delta .
    \end{equation}
    Let $t_0\geq 0$ be such that $\hat A-\epsilon<A(t)<\hat A+\epsilon$ for all $t>t_0$. Then,
    \begin{equation}
    \label{eq66}
        t>t_0\ \text{ and }\ Kb_j(t)(\hat A-\epsilon)>d_j+\epsilon \Rightarrow Kb_jA(t)>d_j+\epsilon. 
    \end{equation}
    Observe also that $\hat A=\sum_i\hat a_i\leq M$, hence
    \begin{equation}
        Kb_j(t)A(t)\leq Kb_j(t)(\hat A+\epsilon)\leq K(M+\epsilon).
    \end{equation}
    Combining this with \cref{eq66}, we get
    \begin{equation}
        t>t_0\text{ and }\ Kb_j(t)(\hat A-\epsilon)>d_j+\epsilon\hspace{1em}\Rightarrow\hspace{1em} Kb_j(t)A(t)\in [d_j+\epsilon, K(M+\epsilon)].
    \end{equation}
    Further combining with \cref{eq64}, we obtain
    \begin{equation}
    \label{eq67}
        t>t_0\ \text{ and }\ b_j(t)>\frac{d_j+\epsilon}{K(\hat A-\epsilon)}\hspace{1em}\Rightarrow\hspace{1em} h_j\left(Kb_jA(t)\right)<-\delta.
    \end{equation}
    Thus, from the definition of $\td{b_j(t)}{t}$ (\cref{mainEq2b}) we get
    \begin{equation}
    \label{eq68}
        t>t_0\ \text{ and }\ b_j(t)>\frac{d_j+\epsilon}{K(\hat A-\epsilon)}\hspace{1em}\Rightarrow\hspace{1em}\td{b_j(t)}{t}<-\delta.
    \end{equation}
    We conclude that $\limsup b_j(t)\leq \frac{d_j+\epsilon}{K(\hat A-\epsilon)}$ and because $b_j(t)\leq 1$ is always true, we even have
    \begin{equation}
        \limsup b_j(t)\leq \min\,\left\{\frac{d_j+\epsilon}{K(\hat A-\epsilon)},1\right\}.
    \end{equation}
    To get a lower bound, in exactly the same way we get the following analogue of \cref{{eq67}}:
    \begin{equation}
    \label{eq69}
        t>t_0\ \text{ and }\ b_j(t)<\frac{d_j-\epsilon}{K(\hat A+\epsilon)}\Rightarrow\ h_j\left(Kb_jA(t)\right)>\delta,
    \end{equation}
    possibly for a different $\delta>0$ and where now $\epsilon\in (0,d_j)$. By \cref{mainEq2b}, when $b_j(t)<1$, we have $\td{b_j(t)}{t}=h_j\left(Kb_jA(t)\right)$, therefore \cref{eq69} implies
    \begin{equation}
    \label{eq70}
        t>t_0\ \text{ and }\ b_j(t)<\min\,\left\{\frac{d_j-\epsilon}{K(\hat A+\epsilon)},1\right\}\hspace{1em}\Rightarrow\hspace{1em}\td{b_j(t)}{t}>\delta.
    \end{equation}
    We conclude that $\liminf b_j(t)\geq \min\,\left\{\frac{d_j-\epsilon}{K(\hat A+\epsilon)},1\right\}$. Therefore, we have shown
    \begin{equation}
        \min\,\left\{\frac{d_j-\epsilon}{K(\hat A+\epsilon)},1\right\}\leq \liminf b_j(t)\leq \limsup b_j(t)\leq \min\,\left\{\frac{d_j+\epsilon}{K(\hat A-\epsilon)},1\right\}
    \end{equation}
    Taking $\epsilon\to 0$, we obtain $\limd{t\to\infty}b_j(t)=\min\,\left\{\frac{d_j}{K\hat A},1\right\}$, which by \cref{GENERAL-propositionEquilibriumProperties} is equal to $\hat b_j$. Summing over $j$, we get $B(t)\to \hat B$, and using this we show in a similar way that $a_i(t)\to \hat a_i$.
\end{proof}

\begin{proof}[Proof of \cref{theoremMain}]
    Combine \cref{GENERAL-existenceTheorem,GENERAL-corollaryUniqueEquilibrium,GENERAL-propositionStability}.
\end{proof}

\subsection{Equilibrium value}
\label{GENERAL-sectionEquilibriumValue}
Here we give the proofs of \cref{propositionHomogeneousEq,propositionIstarJstarExpressions,propositionHeterogeneousEq} that give the value of the equilibrium. Note that while these propositions are stated in \cref{sectionAnalyticalResults} for the system of \cref{mainEq1a,mainEq1b}, we prove them for the more general equations \labelcref{mainEq2a,mainEq2b}.

\begin{proof}[Proof of \cref{propositionHomogeneousEq}]
    From \cref{GENERAL-propositionEquilibriumProperties} we have
    \begin{equation}
    \label{GENERAL-eq55}
        \hat a_i=\min\,\left\{\frac{c}{K\sum_j\hat b_j},1\right\},\ \  \hat b_j=\min\,\left\{\frac{d}{K\sum_i\hat a_i},1\right\},
    \end{equation}
    which implies $\hat a_1=\ldots =\hat a_M=:\hat a$ and similarly $\hat b_1=\ldots =\hat b_N=:\hat b$. Therefore, \cref{GENERAL-eq55} simplifies to
    \begin{equation}
    \label{GENERAL-eq57}
        \hat a=\min\,\left\{\frac{c}{KN\hat b},1\right\},\ \  \hat b=\min\,\left\{\frac{d}{KM\hat a},1\right\},
    \end{equation}
    The result will follow if we show that $\hat a=1$. Suppose on the contrary that $\hat a<1$. Then, by \cref{GENERAL-eq57},
    \begin{equation}
    \label{GENERAL-eq56}
        \hat a=\frac{c}{KN\hat b}.
    \end{equation}
    By \cref{mainEq2b}, at equilibrium we have
    \begin{equation}
        0=\td{b_j}{t}\leq h_j\left(K\hat bM\hat a\right)=r\cdot(d-K\hat bM\hat a),
    \end{equation}
    because $h_j(x)=r\cdot(d-x)$ by \cref{mainEq1b} and the assumption $d_j=d$ for all $j$. We conclude that $d\geq K\hat bM\hat a$. Substituting $\hat a$ from \cref{GENERAL-eq56}, we obtain
    \begin{equation}
    \begin{aligned}
        d\geq K\hat bM\cdot \frac{c}{KN\hat b}=\frac{Mc}{N},
    \end{aligned}
    \end{equation}
    which is impossible because by assumption $Mc>Nd$. We conclude that $\hat a=1$. Substituting this into the expression for $\hat b$ in \cref{GENERAL-eq57}, we obtain $\hat b=\min\,\left\{\frac{d}{KM},1\right\}$, which completes the proof.
\end{proof}

For \cref{propositionIstarJstarExpressions} we will need a couple of lemmas.

\begin{lemma}
\label{GENERAL-lemmaF0djExpression}
    Assume that $d_1\leq\ldots\leq d_N$ and let $x\in (0,M]$. We have
    \begin{align}
    \label{GENERAL-eq29}
        F_0(x) & =\sum_i\min\,\left\{\frac{c_i}{K(N-j')x+\sumdu{k=1}{j'}d_{k}},\frac{1}{x}\right\}\\
        & =\sum_i\min\,\left\{\frac{c_i}{K(N-j'')x+\sumdu{k=1}{j''}d_{k}},\frac{1}{x}\right\},
    \end{align}
    where
    \begin{equation}
        j'=\max\,\left\{j:\frac{d_j}{K}\leq x\right\},\ \ j''=\max\,\left\{j:\frac{d_j}{K}<x\right\}.
    \end{equation}
    with the convention that $\max\,\emptyset=0$. In particular, for any $j$,
    \begin{align}
    \label{GENERAL-eq31}
        F_0\left(\frac{d_j}{K}\right)=\sum_i\min\,\left\{\frac{c_i}{(N-j)d_j+\sumdu{k=1}{j}d_{k}},\frac{K}{d_j}\right\}.
    \end{align}
\end{lemma}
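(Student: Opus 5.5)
The plan is to work directly from the definitions in \cref{GENERAL-auxFunctions} and the rewriting \cref{GENERAL-eq7}, which gives
\[
F_0(x)=\sum_i\min\left\{c_i h(x),\frac1x\right\},\qquad h(x)=\frac{1}{\sum_j\min\{d_j,Kx\}}.
\]
Everything reduces to computing the denominator $\sum_j\min\{d_j,Kx\}$ explicitly in terms of the ordering $d_1\le\cdots\le d_N$.

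The first step is to partition the indices $j$ according to whether $d_j\le Kx$ or $d_j>Kx$. Because the $d_j$ are sorted, the set $\{j: d_j/K\le x\}$ is exactly $\{1,\ldots,j'\}$, and its complement is $\{j'+1,\ldots,N\}$ (with $j'=0$ if the first set is empty). On the first block, $\min\{d_j,Kx\}=d_j$; on the second block it equals $Kx$. Summing gives
\[
\sum_j\min\{d_j,Kx\}=K(N-j')x+\sum_{k=1}^{j'}d_k.
\]
Substituting this into $h(x)$ yields the first expression in \cref{GENERAL-eq29}.

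For the second expression with $j''$, I would argue that the only indices in which $j'$ and $j''$ differ are those with $d_j=Kx$, that is, $j''<j\le j'$. For these indices $d_j=Kx$, so
\[
K(N-j'')x+\sum_{k\le j''}d_k=K(N-j')x+\sum_{k\le j'}d_k,
\]
and the two denominators coincide. This is just the observation that $\min\{d_j,Kx\}$ is unambiguous at ties.

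For \cref{GENERAL-eq31}, I would set $x=d_j/K$; this requires $x\in(0,M]$, or else one notes the formula is purely algebraic in $x$ and holds for any positive $x$. The main subtlety is that $j$ need not equal $j'$ or $j''$ when there are repeated values of $d$. Here the tie observation is used again: for every index $j$ with $j''\le j\le j'$, the expression $K(N-j)x+\sum_{k\le j}d_k$ is the same, since each term being added or removed has $d_k=Kx$. The index $j$ certainly lies in $[j'',j']$, because $d_j/K=x$. Substituting $Kx=d_j$ then gives the denominator $(N-j)d_j+\sum_{k\le j}d_k$. Finally, multiplying numerator and denominator appropriately shows that $1/x=K/d_j$, which completes the formula. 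There is no real obstacle beyond bookkeeping of ties.
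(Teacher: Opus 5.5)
Your proposal is correct and follows essentially the same route as the paper. Like the paper, you compute $\sum_j\min\{d_j,Kx\}=KxH(x)$ by splitting the sorted indices at $j'$, use $d_j=Kx$ for $j''<j\le j'$ to get the $j''$ form, and specialize to $x=d_j/K$ via the invariance of $K(N-m)x+\sum_{k\le m}d_k$ for $m\in[j'',j']$. Your remark that the computation holds for all $x>0$, so that $d_j/K>M$ causes no trouble, is a small point the paper leaves implicit.
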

\begin{proof}
    From the definition of $j'$ and $j''$ and \cref{GENERAL-auxFunctions}, we have that
    \begin{equation}
    \begin{aligned}
    \label{GENERAL-eq60}
        H(x) & = \frac{\sumdu{k=1}{j'}d_k}{Kx}+(N-j')\\
        & = \frac{\sumdu{k=1}{j''}d_k}{Kx}+(N-j''),
    \end{aligned}
    \end{equation}
    because for terms $j''<j\leq j'$ we have $\frac{d_j}{Kx}=1$. Plugging the first of these expressions into the expression for $G$ in \cref{GENERAL-auxFunctions}, we get
    \begin{equation}
    \begin{aligned}
        F(x)=G(H(x)) & =\sum_i\min\, \left\{\frac{c_i}{KH(x)},1\right\}\\
        & = \sum_i\min\, \left\{\frac{c_ix}{\sumdu{k=1}{j'}d_k+K(N-j')x},1\right\}
    \end{aligned}
    \end{equation}
    and
    \begin{equation}
        F_0(x)=\frac{F(x)}{x}=\sum_i\min\, \left\{\frac{c_i}{\sumdu{k=1}{j'}d_k+K(N-j')x},\frac{1}{x}\right\}.
    \end{equation}
    The proof for $j''$ is exactly the same, except that we use the second expression from \cref{GENERAL-eq60}.
    
    If $x=\frac{d_j}{K}$ for some $j$, then $j'\geq j$ and $d_{j}=d_{j+1}=\ldots =d_{j'}$, therefore we have
    \begin{equation}
    \begin{aligned}
        (N-j)d_{j}+\sumdu{k=1}{j}d_k & =(N-j')d_j+(j'-j)d_j+\sumdu{k=1}{j}d_k\\
        & =(N-j')d_{j'}+\sumdu{k=1}{j'}d_k.
    \end{aligned}
    \end{equation}
    Thus, substituting $x=\frac{d_j}{K}$ into \cref{GENERAL-eq29}, we get
    \begin{equation}
    \begin{aligned}
        F_0\left(\frac{d_j}{K}\right) & =\sum_i\min\,\left\{\frac{c_i}{(N-j')d_{j'}+\sumdu{k=1}{j'}d_{k}},\frac{K}{d_{j'}}\right\}\\
        & =\sum_i\min\,\left\{\frac{c_i}{(N-j)d_j+\sumdu{k=1}{j}d_{k}},\frac{K}{d_j}\right\}.
    \end{aligned}
    \end{equation}
\end{proof}

\begin{lemma}
\label{GENERAL-lemmaSaturationCondition}
    Assume that $c_1\leq \ldots\leq c_M$ and $d_1\leq\ldots\leq d_N$. For any $i$, $\hat a_i<1$ if and only if
    \begin{equation}
    \label{GENERAL-eq58}
        \sum_j\min\,\left\{\frac{d_j}{(M-i)c_i+\sumdu{k=1}{i}c_{k}},\frac{K}{c_i}\right\}>1.
    \end{equation}
    Similarly, for any $j$, $\hat b_j<1$ if and only if
    \begin{equation}
    \label{GENERAL-eq54}
        \sum_i\min\,\left\{\frac{c_i}{(N-j)d_j+\sumdu{k=1}{j}d_{k}},\frac{K}{d_j}\right\}>1.
    \end{equation}
\end{lemma}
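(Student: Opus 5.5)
We prove the statement for $\hat b_j$ first; the statement for $\hat a_i$ then follows by exchanging the roles of the two groups. Under this exchange $F_0$ is replaced by the analogous function built from $H\circ G$. Write $\hat A=\sum_i\hat a_i$. By \cref{GENERAL-lemmaEquilibriumCondition} and \cref{GENERAL-lemma_g0}, $\hat A$ is the unique point of $(0,M]$ with $F_0(\hat A)=1$. By \cref{GENERAL-propositionEquilibriumProperties}, $\hat b_j=\min\{d_j/(K\hat A),1\}$. Hence
\begin{equation*}
\hat b_j<1 \iff \hat A>\frac{d_j}{K}.
\end{equation*}
By \cref{GENERAL-eq31}, the left-hand side of \cref{GENERAL-eq54} is exactly $F_0(d_j/K)$. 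So the claim reduces to
\begin{equation*}
\hat A>\frac{d_j}{K}\iff F_0\left(\frac{d_j}{K}\right)>1 .
\end{equation*}

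For the direction ``$\Leftarrow$'', suppose $F_0(d_j/K)>1$. If $d_j/K\geq \hat A$, then since $F_0$ is non-increasing we would get $F_0(d_j/K)\leq F_0(\hat A)=1$, a contradiction, provided $d_j/K\in(0,M]$. If instead $d_j/K>M$, then $d_j/K>\hat A$ anyway, and we need $F_0(d_j/K)\leq 1$ there. I would check this directly. For $x\geq M$, every term $\min\{c_ih(x),1/x\}$ in \cref{GENERAL-eq7} is at most $1/x\leq 1/M$, so $F_0(x)\leq 1$. The monotonicity argument of \cref{GENERAL-lemma_g0} also extends verbatim to all of $(0,\infty)$, so it is cleanest to note this once and work on $(0,\infty)$.

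For the direction ``$\Rightarrow$'', suppose $\hat A>d_j/K$. Monotonicity only gives $F_0(d_j/K)\geq 1$, and the main obstacle is ruling out equality. Assume $F_0(d_j/K)=1$. Then $F_0$ equals $1$ on the whole interval $[d_j/K,\hat A]$, which has nonempty interior. On that interior, $F_0'=0$ at all but finitely many points. By the uniqueness argument in \cref{GENERAL-lemma_g0} (the sets $A$ and $B^c$ must be empty), this forces $F_0=\sum_ic_i/\sum_jd_j\neq 1$, a contradiction. In fact this is just the uniqueness of the solution of $F_0(x)=1$ from \cref{GENERAL-lemma_g0}, applied to the two solutions $d_j/K$ and $\hat A$. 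Again, this requires $d_j/K\in(0,M]$, which holds since $d_j/K<\hat A\leq M$. Hence $F_0(d_j/K)>1$.

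Finally, the symmetric statement for $\hat a_i$ follows the same way, using the function $x\mapsto H(G(x))/x$ on $(0,N]$ in place of $F_0$. Its monotonicity, its uniqueness of the solution of the equation equal to $1$, and the analogue of \cref{GENERAL-eq31} are all obtained by swapping $(c,M)$ with $(d,N)$ in \cref{GENERAL-lemma_g0} and \cref{GENERAL-lemmaF0djExpression}. The variables $\hat a_i$ and $\sum_j\hat b_j$ then play the roles that $\hat b_j$ and $\hat A$ played above.
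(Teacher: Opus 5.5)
Your proof is correct and follows the paper's argument. Like the paper, you reduce $\hat b_j<1$ to $d_j/K<\sum_i\hat a_i$, identify the left-hand side of \cref{GENERAL-eq54} with $F_0(d_j/K)$ via \cref{GENERAL-eq31}, conclude from the monotonicity of $F_0$ and the uniqueness of the solution of $F_0(x)=1$, and obtain the other statement by symmetry. Your explicit handling of the case $d_j/K>M$, where $F_0(x)\le M/x<1$, is a small point the paper leaves implicit, since \cref{GENERAL-lemma_g0} is only stated on $(0,M]$.
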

\begin{proof}
    We will only show the second statement; the first one follows by symmetry. By \cref{GENERAL-lemmaEquilibriumCondition}, we have $\sum_i\hat a_i=x^*$ where $x^*$ is the unique solution of $F_0(x)=1$, and by \cref{GENERAL-propositionEquilibriumProperties},
    \begin{equation}
        \hat{b}_{j}=\min \left\{  \frac{d_{j}}{K\sum_{i}\hat{a}_{i}},1\right\}=\min \left\{  \frac{d_{j}}{Kx^*},1\right\}.
    \end{equation}
    It follows that $\hat b_j<1$ if and only if $d_j/K<x^*$. Because $F_0$ is non-increasing, $F_0(x^*)=1$, and $F_0(x)\neq 1$ for any $x\neq x^*$ (\cref{GENERAL-lemma_g0}), we get
    \begin{equation}
        \label{GENERAL-eq47}
        \hat b_j<1\text{ if and only if }F_0(d_j/K)>1.
    \end{equation}
    The result now follows by substituting the expression for $F_0(d_j/K)$ found in \cref{GENERAL-lemmaF0djExpression} into \cref{GENERAL-eq47}.
\end{proof}

We are now ready to prove \cref{propositionIstarJstarExpressions,propositionHeterogeneousEq}
\begin{proof}[Proof of \cref{propositionIstarJstarExpressions}]
    Let $s_i$ denote the denominator of the first term inside the $\min\{\cdot,\cdot\}$ in \cref{GENERAL-eq58}, that is,
    \begin{equation}
        s_{i}=(M-i)c_i+\sumdu{k=1}{k}c_k.
    \end{equation}
    We show that $s_{i}$ is non-decreasing in $i$. We have
    \begin{equation}
    \begin{aligned}
        s_{i+1}-s_i & =(M-i-1)c_{i+1}+\sumdu{k=1}{i+1}c_k-(M-i)c_i+\sumdu{k=1}{i}c_k\\
        & =(M-i)(c_{i+1}-c_i)-c_{i+1}+c_{i+1}\\
        & =(M-i)(c_{i+1}-c_i)\geq 0,
    \end{aligned}
    \end{equation}
    because by assumption $c_1\leq\ldots\leq c_M$. Therefore, the left-hand side of \cref{GENERAL-eq58} is non-increasing in $i$. We conclude that
    \begin{equation}
    \begin{aligned}
        \hat i & =|\{i:\hat a_i<1\}\\
        & = \left|\left\{i:\sum_j\min\,\left\{\frac{d_j}{(M-i)c_i+\sumdu{k=1}{i}c_{k}},\frac{K}{c_i}\right\}>1\right\}\right|\\
        & =\max\,\left\{i:\sum_j\min\,\left\{\frac{d_j}{(M-i)c_i+\sumdu{k=1}{i}c_{k}},\frac{K}{c_i}\right\}>1\right\}.
    \end{aligned}
    \end{equation}
    where the first equality is the definition of $\hat i$ (\cref{defIstarJstar}) and the second line follows from \cref{GENERAL-lemmaSaturationCondition}. This concludes the proof of \cref{expressionIstar}. \Cref{expressionJstar} follows by symmetry.
\end{proof}

\begin{proof}[Proof of \cref{propositionHeterogeneousEq}]
    Because the $d_j$'s are in increasing order, we have from \cref{GENERAL-propositionEquilibriumProperties} that
    \begin{equation}
    \label{GENERAL-eq50}
        \hat j=\max\,\{j:\hat b_j<1\}=\max\,\left\{j:d_j/K<\sum_i\hat a_i\right\},
    \end{equation}
    which by \cref{GENERAL-lemmaEquilibriumCondition} becomes
    \begin{equation}
        \hat j=\max\,\{j:d_j/K<x^*\}.
    \end{equation}
    Thus, \cref{GENERAL-lemmaF0djExpression} gives
    \begin{equation}
    \label{GENERAL-eq49}
        F_0(x^*)=\sum_i\min\,\left\{\frac{c_i}{K(N-\hat j)x^*+\sumdu{j=1}{\hat j}d_{j}},\frac{1}{x^*}\right\}.
    \end{equation}
    The first expression inside the $\min\{\cdot,\cdot \}$ is smaller than the second if and only if
    \begin{equation}
    \label{GENERAL-eq48}
    \begin{aligned}
        c_ix^* & < K(N-\hat j)x^*+\sumdu{j=1}{\hat j}d_{j}\Leftrightarrow\\
        c_i & < K(N-\hat j)+\frac{\sumdu{j=1}{\hat j}d_{j}}{x^*}\Leftrightarrow\\
        c_i & < K\left[\sumdu{j=\hat j+1}{N}1+\sumdu{j=1}{\hat j}\frac{d_{j}}{K\sum_i\hat a_i}\right],
    \end{aligned}
    \end{equation}
    where in the last line we have used $x^*=\sum_i\hat a_i$ (\cref{GENERAL-lemmaEquilibriumCondition}). Now observe that from \cref{GENERAL-propositionEquilibriumProperties,GENERAL-eq50} and the fact that the $d_j$'s are in increasing order, we have that $\hat b_j=\frac{d_j}{K\sum_i\hat a_i}$ whenever $j\leq \hat j$ and $\hat b_j=1$ otherwise. Therefore, \cref{GENERAL-eq48} becomes
    \begin{equation}
        c_i <K\sum_j\hat b_j.
    \end{equation}
    By \cref{GENERAL-propositionEquilibriumProperties}, this is equivalent to $\hat a_i<1$, and from the definition of $\hat i$ (\cref{defIstarJstar}) and the fact that the $c_i$'s are in increasing  order, it is equivalent to $i\leq \hat i$. Therefore,
    \cref{GENERAL-eq49} becomes
    \begin{equation}
    \begin{aligned}
        F_0(x^*) & =\frac{\sumdu{i=1}{\hat i}c_i}{K(N-\hat j)x^*+\sumdu{j=1}{\hat j}d_j}+\frac{(M-\hat i)}{x^*}\\
        & =\frac{C^*}{K(N-\hat j)x^*+D^*}+\frac{(M-\hat i)}{x^*},
    \end{aligned}
    \end{equation}
    where we have used the definitions of $C^*$ and $D^*$ from \cref{defCDL}. By the definition of $x^*$, we have $F_0(x^*)=1$, therefore the last equation gives
    \begin{equation}
        1=\frac{C^*}{K(N-\hat j)x^*+D^*}+\frac{(M-\hat i)}{x^*}.
    \end{equation}
    simplifying we get
    \begin{equation}
    \begin{aligned}
        K(N-\hat j)(x^*)^2+[D^*-C^*-K(M-\hat i)(N-\hat j)]x^*-D^*(M-\hat i)=0.
    \end{aligned}
    \end{equation}
    If $\hat j=N$, we get $x^*=\frac{D^*(M-\hat i)}{D^*-C^*}$, otherwise
    \begin{equation}
        x^*=\frac{-L\pm \sqrt{L^2+4KD^*(M-\hat i)(N-\hat j)}}{2K(N-\hat j)},
    \end{equation}
    where $L=D^*-C^*-K(M-\hat i)(N-\hat j)$ (see \cref{defCDL}). The solution with the minus sign in front of the radical is always negative, therefore the only solution inside $(0,M]$ is the `+' one, which concludes the proof.
\end{proof}

\section{Discussion}

We have modeled the dynamics of selectivity of two groups of agents that seek to repeatedly form inter-group matches. At the core of our model was a feedback mechanism which is a consequence of the structure of such networks, namely the fact that an increase in selectivity of one group pushes the other group to become less selective, due to reduced matching opportunities, and vice versa. We showed using a combination of analytic techniques and numerical simulations that this feedback mechanism can act on (possibly tiny) differences in group size or composition to create extreme differences in selectivity between the two groups. This phenomenon can be observed even if the two groups' composition in terms of goals is almost (but not exactly) identical.

A striking implication of our results is that large behavioral differences need not imply equally large differences in underlying preferences. Moreover, individuals with relatively high target matching rates in the group that ends up being `in demand' may nevertheless become more selective than individuals with much lower target matching rates in the opposite group. In other words, observed behavior may align not with individual goals but rather with the opportunity structure generated collectively by the individuals of the two groups and the feedback between them. In repeated matching environments, group membership may therefore explain more variation in the observed selectivity than individual preferences do.

This point is particularly relevant for online or casual dating. Large observed differences in mate selection behavior between heterosexual men and women are often interpreted as reflecting underlying differences between the sexes, either in evolved mating psychology \cite{buss2019mate} or in social roles \cite{eagly1999origins}. Our results suggest that part of the observed behavioral asymmetry may instead arise from the dynamics of the matching process itself. Additionally, although the present model does not incorporate social learning or norm formation, the persistent behavioral asymmetries it predicts may contribute to the emergence and stabilization of gendered dating scripts \cite{rose1993young,eaton2011has}.

Beyond dating, our model applies more broadly to social settings where there are two types of agents that seek to match with one another and where matches are not exclusive. Examples include scientific papers and conference submissions, job interviews, apartment hunt, arranged marriage markets at the screening phase, various types of clubs and potential members, and other settings with similar structure. In all these cases, selectivity is determined not only by an individual's preferences, but it is also an adaptive response to the opportunities created by the collective behavior of the opposite side. For example, job candidates may have a target number of interviews they wish to obtain. Those receiving few interview invitations are likely to accept invitations even from less desirable employers, whereas those receiving many invitations can afford to be more selective. Employers face an analogous trade-off: when applications are scarce, they may invite weaker candidates to interview, whereas a large applicant pool allows them to become more selective. Our model predicts that, in sufficiently large job markets where there are many candidates and many employers that can potentially match, such feedback drives the market toward a highly asymmetric equilibrium in which one side finds it much easier to secure interviews than the other.

More generally, our model generates several empirical predictions applicable to any matching network that satisfies our assumptions. First, it predicts that selectivity polarization should become stronger when individuals encounter many potential partners relative to their desired matching rate, and weaker when encounters are scarce. Second, assuming that selectivity polarization is observed, the selectivities of new entrants of the two groups should be more similar across groups than selectivities of older members are. Third, the model predicts threshold-like transitions in selectivity when the total matching demand of one group crosses that of the other, whether through changes in population size or in average desired matching rates. Finally, interventions that alter the opportunity structure, e.g., by controlling encounter rates on a dating platform, may substantially change observed selectivity without changing users' underlying preferences.

Our work also contributes to the broader literature on matching by focusing on the dynamics through which selectivity emerges. Existing matching models typically derive selectivity by solving an optimization problem. Our model, instead, treats selectivity as a state variable that evolves through adaptive feedback. To make this adaptive feedback possible, agents have to keep track of their matching rate, which requires remembering at least the recent past.\footnote{Dependence on the past does not explicitly appear in our equations, because we take a mean-field approach and use the individual's expected value of matching rate in place of the actual (recently) experienced matching rate. The idea is that, as long as the adaptation is relatively slow with respect to the rate of encounters, the two rates will be approximately equal. A direction for future work would be to show this equivalence or study how the dynamics differ when one uses the actual experienced matching rates.} 

The model we present also complements evolutionary models of mate choice, parental care, and sexual selection \cite{kokko2002mutual,kokko2008parental,fromhage2016coevolution}, which formalize ideas of parental investment theory \cite{trivers2017parental}. Such models include an analogous feedback mechanism between males and females, but at an evolutionary timescale: under certain conditions, the more parental care one sex provides, the less parental care the other sex is predicted to evolve under natural selection, and likewise for choosiness (selectivity), so these traits evolve under natural selection over evolutionary timescales. In contrast, our model treats selectivity as a variable that individuals adjust repeatedly during their lifetime in response to matching opportunities, while holding their objectives fixed.

Several simplifying assumptions of the model should be kept in mind. We assumed that matches have zero duration, so that individuals immediately return to the pool of potential partners after matching, and that agents adapt their selectivity solely in response to their experienced matching rate. We also ignored strategic considerations and we only considered a single way that different levels of attractiveness may make preferences of different agents correlate. Relaxing these assumptions, for example by allowing temporary matches, strategic behavior, or by making agents progressively more reluctant to reduce their selectivity when it falls below certain levels, may produce qualitatively different dynamics and constitutes an interesting direction for future work. Yet the main conceptual contribution of the present paper remains valid, namely the fact that structural feedback mechanisms should be taken into account when evaluating behavioral asymmetries in repeated matching environments. Understanding when such asymmetries reflect genuine differences in preferences and when they instead arise from the dynamics of matching remains an important challenge for future theoretical and empirical work.

\section*{Acknowledgments}

The author thanks Pantelis Analytis, Athanasios Kechagias, Silvia Sonderegger, colleagues at the Institute for Advanced Study in Toulouse, members of the BirthRites group at the Max Planck Institute for Evolutionary Anthropology, members of the Collective Minds group at the Complexity Science Hub, participants of the 2025 French Regional Conference on Complex Systems, the 2025 Sunbelt conference, the ARS’25 Tenth International Workshop on Social Network Analysis, the 2026 European Human Behaviour and Evolution Association conference, and the 37th annual Human Behavior and Evolution Society conference for helpful comments and discussions.

The author was supported by ANR grant ANR-17-EURE-0010 (Investissements d’Avenir program) and by Sapere Aude grant 2065-00038B awarded to P. Analytis by the Independent Research Fund Denmark.

\section*{Code availability}

The Python code used for the simulations and figure generation is available on GitHub at \url{https://github.com/alexgelas/matchingDynamics}.

\bibliographystyle{apacite}
\bibliography{references}

@article{adachi2003search,
  title={A search model of two-sided matching under nontransferable utility},
  author={Adachi, Hiroyuki},
  journal={Journal of Economic Theory},
  volume={113},
  number={2},
  pages={182--198},
  year={2003},
  publisher={Elsevier}
}

@article{ashlagi2017unbalanced,
  title={Unbalanced random matching markets: The stark effect of competition},
  author={Ashlagi, Itai and Kanoria, Yash and Leshno, Jacob D},
  journal={Journal of Political Economy},
  volume={125},
  number={1},
  pages={69--98},
  year={2017},
  publisher={University of Chicago Press Chicago, IL}
}

@article{eaton2011has,
  title={Has dating become more egalitarian? A 35 year review using sex roles},
  author={Eaton, Asia Anna and Rose, Suzanna},
  journal={Sex roles},
  volume={64},
  number={11},
  pages={843--862},
  year={2011},
  publisher={Springer}
}

@article{rose1993young,
  title={Young singles' contemporary dating scripts},
  author={Rose, Suzanna and Frieze, Irene Hanson},
  journal={Sex Roles},
  volume={28},
  number={9},
  pages={499--509},
  year={1993},
  publisher={Springer}
}

@article{kokko2008parental,
  title={Parental investment, sexual selection and sex ratios},
  author={Kokko, Hanna and Jennions, Michael D},
  journal={Journal of evolutionary biology},
  volume={21},
  number={4},
  pages={919--948},
  year={2008},
  publisher={Blackwell Publishing Ltd Oxford, UK}
}

@article{buss2019mate,
  title={Mate preferences and their behavioral manifestations},
  author={Buss, David M and Schmitt, David P},
  journal={Annual review of psychology},
  volume={70},
  number={1},
  pages={77--110},
  year={2019},
  publisher={Annual Reviews}
}

@article{fromhage2016coevolution,
  title={Coevolution of parental investment and sexually selected traits drives sex-role divergence},
  author={Fromhage, Lutz and Jennions, Michael D},
  journal={Nature communications},
  volume={7},
  number={1},
  pages={12517},
  year={2016},
  publisher={Nature Publishing Group UK London}
}

@misc{SSRS2025OnlineDating,
  author       = {{SSRS}},
  title        = {The Public and Online Dating in 2025},
  year         = {2025},
  month        = feb,
  howpublished = {\url{https://ssrs.com/insights/the-public-and-online-dating-in-2025/}},
  note         = {Survey report, accessed 2026-06-25}
}

@article{lauermann2014stable,
  title={Stable marriages and search frictions},
  author={Lauermann, Stephan and N{\"o}ldeke, Georg},
  journal={Journal of Economic Theory},
  volume={151},
  pages={163--195},
  year={2014},
  publisher={Elsevier}
}

@book{nagurney1996projected,
  title={Projected dynamical systems and variational inequalities with applications},
  author={Nagurney, Anna and Zhang, Ding},
  volume={2},
  year={1996},
  publisher={Springer Science \& Business Media}
}

@incollection{lauermann2025matching,
  title={Matching with frictions},
  author={Lauermann, Stephan and N{\"o}ldeke, Georg},
  booktitle={Handbook of the Economics of Matching},
  volume={2},
  pages={579--642},
  year={2025},
  publisher={Elsevier}
}

@article{chade2017sorting,
  title={Sorting through search and matching models in economics},
  author={Chade, Hector and Eeckhout, Jan and Smith, Lones},
  journal={Journal of Economic Literature},
  volume={55},
  number={2},
  pages={493--544},
  year={2017},
  publisher={American Economic Association 2014 Broadway, Suite 305, Nashville, TN 37203-2425}
}

@article{kokko2002mutual,
  title={Why is mutual mate choice not the norm? Operational sex ratios, sex roles and the evolution of sexually dimorphic and monomorphic signalling},
  author={Kokko, Hanna and Johnstone, Rufus},
  journal={Philosophical Transactions of the Royal Society of London. Series B: Biological Sciences},
  volume={357},
  number={1419},
  pages={319--330},
  year={2002},
  publisher={The Royal Society}
}

@inproceedings{tyson2016first,
  title={A first look at user activity on tinder},
  author={Tyson, Gareth and Perta, Vasile C and Haddadi, Hamed and Seto, Michael C},
  booktitle={2016 IEEE/ACM International Conference on Advances in Social Networks Analysis and Mining ({ASONAM})},
  pages={461--466},
  year={2016},
  organization={IEEE}
}

@article{eagly1999origins,
  title={The origins of sex differences in human behavior: Evolved dispositions versus social roles.},
  author={Eagly, Alice H and Wood, Wendy},
  journal={American psychologist},
  volume={54},
  number={6},
  pages={408},
  year={1999},
  publisher={American Psychological Association}
}

@incollection{trivers2017parental,
  title={Parental investment and sexual selection},
  author={Trivers, Robert L},
  booktitle={Sexual selection and the descent of man},
  pages={136--179},
  year={1972},
  publisher={Routledge}
}

@article{gale1962college,
  title={College admissions and the stability of marriage},
  author={Gale, David and Shapley, Lloyd S},
  journal={The American mathematical monthly},
  volume={69},
  number={1},
  pages={9--15},
  year={1962},
  publisher={Taylor \& Francis}
}

@article{roth1992two,
  title={Two-sided matching},
  author={Roth, Alvin E and Sotomayor, Marilda},
  journal={Handbook of game theory with economic applications},
  volume={1},
  pages={485--541},
  year={1992},
  publisher={Elsevier}
}

\newpage

\appendix

\section{Supplementary figures}

\renewcommand\thefigure{\thesection.\arabic{figure}}    
\setcounter{figure}{0}    

\begin{figure}[ht!]
    \centering
    \includegraphics[width=1\linewidth]{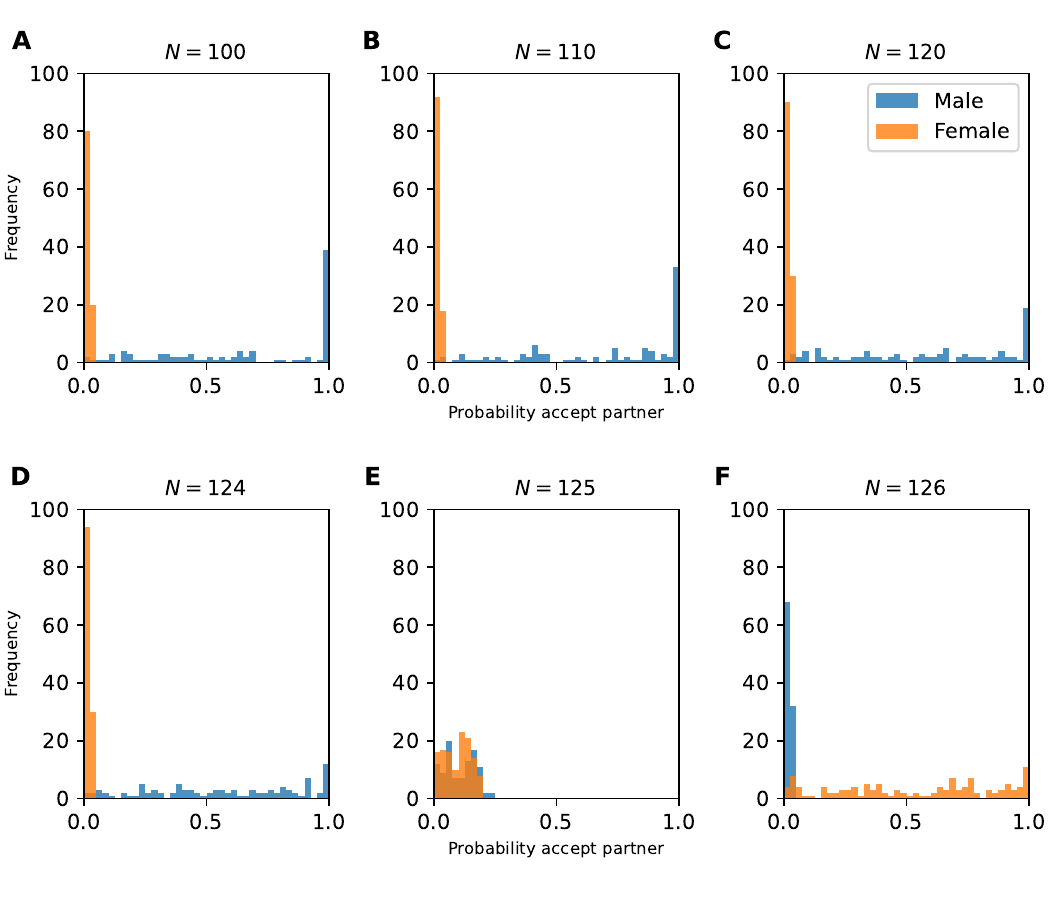}
    \caption{\textbf{Simulation of the system of \cref{mainEq1a,mainEq1b} with different population ratios.} \textbf{(A-F)} The number of females $N$ is indicated at the top of each panel. In all cases, the number of males is $M=100$. All other parameters are as in \cref{fig:heterogeneous}. The selectivity distributions at equilibrium remain essentially unchanged until $N/M\approx 1.25$ which is equal to the ratio of mean target matching rates ${\bar c}/{\bar d}$, and they reverse beyond that value. The parameter values in (A) are identical to those in \cref{fig:heterogeneous} and they are reproduced for comparison purposes.}
    \label{fig:varyingN}
\end{figure}

\begin{figure}[ht!]
    \centering
    \includegraphics[width=1\linewidth]{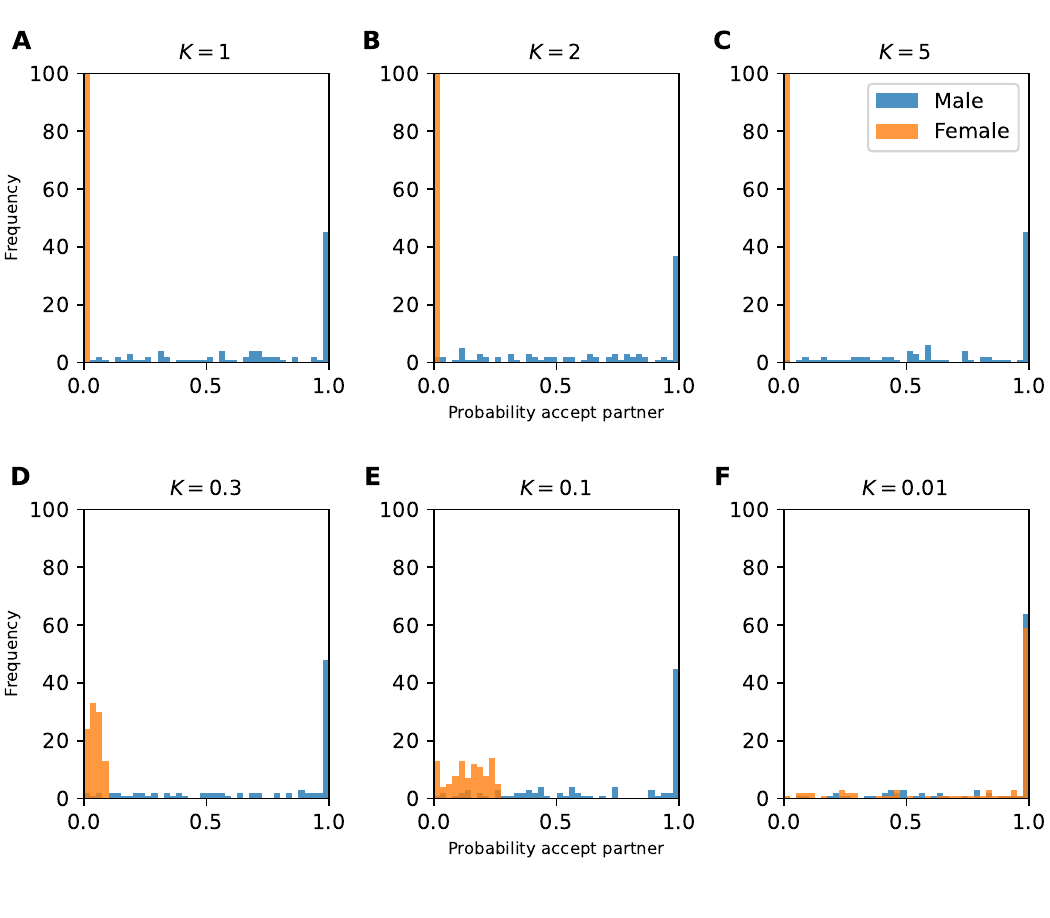}
    \caption{\textbf{Simulation of the system of \cref{mainEq1a,mainEq1b} with different encounter rates $K$.} \textbf{(A-F)} The rate of encounters (per agent of the opposite group) $K$ is indicated at the top of each panel. All other parameters are as in \cref{fig:heterogeneous}. Increasing $K$ has little effect on the equilibrium, except that females become proportionally more selective (not shown). Decreasing $K$ forces females to also become less selective and at $K=0.01$ many of them are non-selective. These values are meaningful only after taking into account the size of each group, which here are set to $M=N=100$. For example, $K=1$ and $M=100$ means that each male encounters $KM=100$ females per unit of time; recall that their target matching rate is on average $\bar d=1$ matches per unit of time. The fact that about half of them are non-selective for $K=0.01$ is thus to be expected. The parameter values in (A) are identical to those in \cref{fig:heterogeneous} and they are reproduced for comparison purposes.}
    \label{fig:varyingK}
\end{figure}

\end{document}